\newcommand{\x}{X}
\newcommand{\y}{Y}
\newcommand{\Y}{{ Y}}
\newcommand{\D}{ {\cal D} }
\newcommand{\ba}[1]{b(#1,\alpha)}
\newcommand{\bta}[1]{\tilde{b}_{h,K}(#1,\alpha)}
\newcommand{\bha}[1]{\hat{b}(#1,\alpha)}
\newcommand{\rmd}{r}
\newtheorem{thm}{Theorem}
\newtheorem{remark}{Remark}
\newtheorem{cor}{Corollary}
\newtheorem{lemma}{Lemma}
\newtheorem{prop}{Proposition}
\DeclareMathOperator*{\argmin}{arg\min}
\DeclareMathOperator*{\argmax}{arg\max}
\newcommand{\sas}{{\cal S} \alpha {\cal S} }
\newcommand{\pb}{{\cal P}^\x}
\newcommand{\pt}{{\cal P}^\y}
\newcommand{\ab}{{\cal A}^\x}
\newcommand{\at}{{\cal A}^\y}
\newtheorem{assumption}{\textbf{H}\hspace{-3pt}}
\Crefname{assumption}{\textbf{H}\hspace{-3pt}}{\textbf{H}\hspace{-3pt}}
\crefname{assumption}{\textbf{H}}{\textbf{H}}
\newcites{New}{References}
\icmltitlerunning{Fractional Langevin Monte Carlo}
\begin{document} 

\twocolumn[
\icmltitle{Fractional Langevin Monte Carlo: Exploring L\'{e}vy Driven Stochastic Differential Equations for Markov Chain Monte Carlo}




\icmlsetsymbol{equal}{*}

\begin{icmlauthorlist}
\icmlauthor{Umut \c Sim\c sekli}{tpt}
\end{icmlauthorlist}

\icmlaffiliation{tpt}{LTCI, T\'{e}l\'{e}com ParisTech, Universit\'{e} Paris-Saclay, 75013, Paris, France}

\icmlcorrespondingauthor{Umut \c Sim\c sekli}{umut.simsekli@telecom-paristech.fr}

\icmlkeywords{levy processes, markov chain monte carlo, langevin dynamics, stochastic gradient mcmc}

\vskip 0.3in
]

\printAffiliationsAndNotice{}

\begin{abstract} 

Along with the recent advances in scalable Markov Chain Monte Carlo methods, sampling techniques that are based on Langevin diffusions have started receiving increasing attention. These so called Langevin Monte Carlo (LMC) methods are based on diffusions driven by a Brownian motion, which gives rise to Gaussian proposal distributions in the resulting algorithms. Even though these approaches have proven successful in many applications, their performance can be limited by the light-tailed nature of the Gaussian proposals. In this study, we extend classical LMC and develop a novel Fractional LMC (FLMC) framework that is based on a family of heavy-tailed distributions, called $\alpha$-stable L\'{e}vy distributions. As opposed to classical approaches, the proposed approach can possess large jumps while targeting the correct distribution, which would be beneficial for efficient exploration of the state space. We develop novel computational methods that can scale up to large-scale problems and we provide formal convergence analysis of the proposed scheme. Our experiments support our theory: FLMC can provide superior performance in multi-modal settings, improved convergence rates, and robustness to algorithm parameters.

\end{abstract} 

\section{Introduction}
\label{sec:intro}

Markov Chain Monte Carlo (MCMC) techniques that are based on continuous diffusions have become increasingly popular due to their success in large-scale Bayesian machine learning. In these techniques, the goal is to generate samples from a \emph{target distribution} $\pi$, by forming a continuous diffusion which has $\pi$ as a stationary distribution. 
In practice, $\pi$ is usually known up to a normalization constant, i.e.\ $\pi(\x) \propto \phi(\x) = \exp(-U(\x))$ for $\x \in \mathds{R}^D$, where $\phi$ is called the \emph{unnormalized target} density and $U$ is called the \emph{potential energy} function. 

Originated in statistical physics \cite{RosskyDollFriedman1978}, Langevin Monte Carlo (LMC) is constructed upon the Langevin diffusion that is defined by the following stochastic differential equation (SDE) \cite{Roberts03}: 
\begin{align}
d\x_t = - \nabla U(\x_t) dt + \sqrt{2}dB_t, \label{eqn:langevin_sde}
\end{align}
where $B_t$ denotes the standard $D$-dimensional Brownian motion. Under certain regularity conditions on $U$, the solution process $(\x_t)_{t \geq 0}$ can be shown to be ergodic with $\pi$ \cite{Roberts03}, which allows us to generate samples from $\pi$ by simulating the continuous-time process \eqref{eqn:langevin_sde} in discrete-time. This approach paves the way for the celebrated Unadjusted Langevin Algorithm (ULA) \cite{Roberts03}, that is given as follows:
\begin{align}
\bar{\x}_{n+1} = \bar{\x}_{n} - \eta_{n+1}  \nabla U(\bar{\x}_{n}) + \sqrt{2\eta_{n+1}} \Delta B_{n+1}, \label{eqn:ula}
\end{align}
where $n$ denotes the iterations, $(\eta_n)_n$ is a sequence of step-sizes, and $(\Delta B_{n})_n$ is a sequence of independent and identically-distributed (i.i.d.) standard Gaussian random variables. Convergence properties of ULA have been studied in \cite{lamberton2003recursive}.

In a statistical physics context, $\x_t$ often represents the position of a particle (at time $t$) that is under the influence of a random force. In this case, the Langevin equation \eqref{eqn:langevin_sde} is motivated by the hypothesis that this random force is the sum of many i.i.d.\ random `pulses', whose variance is assumed to be finite \cite{yanovsky2000levy}. Then, by the central limit theorem (CLT), the sum of these pulses converges to a Gaussian random variable, which justify the choice of the Brownian motion in the Langevin equation \eqref{eqn:langevin_sde}.

A natural question arises if we relax the finite variance assumption and allow the random pulses to have infinite variance. In such circumstances, the `usual' CLT would not hold; however, one can still show that the sum of these pulses converges to a broader class of heavy-tailed distributions called \emph{$\alpha$-stable} (or L\'{e}vy-stable) distributions \cite{paul1937theorie}. Since the law of the random force is non-Gaussian in this case, the Brownian motion would not be appropriate in \eqref{eqn:langevin_sde} and it needs to be replaced with the \emph{$\alpha$-stable L\'{e}vy motion}, which will be described in Section~\ref{sec:bg}.

As opposed to the Brownian motion, which is almost surely continuous, the L\'{e}vy motion can contain discontinuities that are often referred to as `jumps'. Due to these jumps, the SDEs that are driven by L\'{e}vy motions are also called \emph{anomalous diffusions}. It has been noticed that this heavy-tailed nature of the L\'{e}vy processes can be more appropriate for modeling natural phenomena that might incur large variations; a situation often encountered in statistical physics \cite{eliazar2003levy}, finance \cite{mandelbrot2013fractals}, and signal processing \cite{kuruoglu1999signal}.

Despite the fact that L\'{e}vy-driven SDEs have been studied in more general Monte Carlo contexts (e.g.\ for financial simulations) \cite{konakov2011weak,mikulevivcius2011rate}, surprisingly, their use in MCMC has been left widely unexplored. In the statistical physics literature, \citet{ditlevsen1999anomalous} considered a L\'{e}vy-driven SDE with a double-well potential and investigated its waiting-times. In a similar context, \citet{eliazar2003levy} developed an approximate technique based on Tauberian theorems for targeting a L\'{e}vy-driven system to a pre-specified distribution, where they required the target distribution to be exactly evaluated. Whilst being relevant, the applicability and the impact of these approaches are rather limited in the domain of machine learning. 

\begin{figure}[t]
\centering
\includegraphics[width=1\columnwidth]{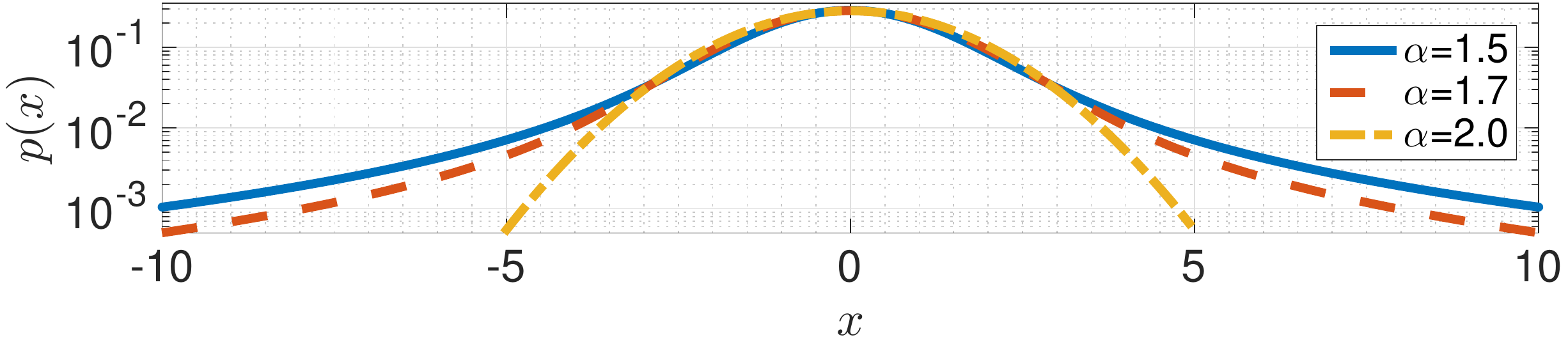} \\
\hfill  \includegraphics[width=0.975\columnwidth]{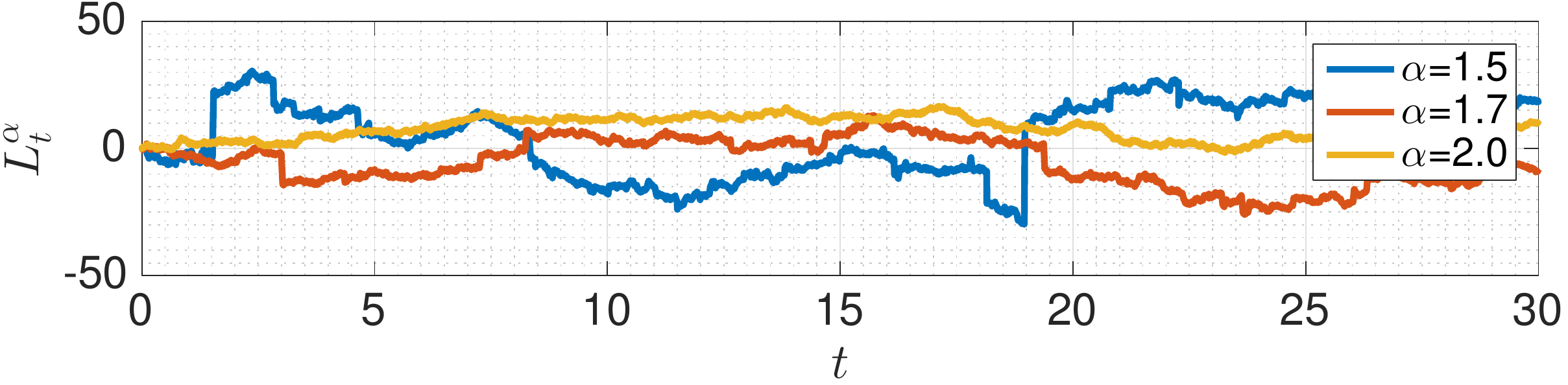}  
\vspace{-12pt}
\caption{Top: probability density functions of $\sas$ . Bottom: illustration of $\alpha$-stable L\'{e}vy motions.}  
\vspace{-12pt}
\label{fig:stablepdf_motion}
\end{figure}
 
In this study, we explore the use of L\'{e}vy-driven SDEs within MCMC. Encouraged by earlier studies that illustrate the benefits of using heavy-tailed distributions in MCMC (e.g., improved convergence rates) \cite{stramer1999langevin,jarner2007convergence}, we aim at investigating the potential benefits of the usage of the L\'{e}vy motions in LMC, in lieu of the classical Brownian motion. 
We extend classical LMC and develop a novel Fractional LMC framework, which targets the correct distribution even in the presence of the jumps induced by the L\'{e}vy motion. We then develop novel computational methods that can scale up to large-scale problems and provide formal theoretical analysis of the convergence behavior of the proposed approach. We support our theoretical results by several synthetic and real experiments. Our experiments show that the proposed approach forms a viable alternative to classical LMC with additional benefits, such as providing superior performance in multi-modal settings, higher convergence rates, and robustness to algorithm parameters. The proposed approach also opens up several interesting future directions, as we will point out in Section~\ref{sec:conc}.

\section{Technical Background}
\label{sec:bg}

\textbf{Stable distributions: }
Stable distributions are heavy-tailed distributions. They are the limiting distributions in the generalized central limit theorem: the properly scaled sum of i.i.d.\ random variables, which do \emph{not} necessarily have finite variance, will converge to an $\alpha$-stable random variable \cite{samorodnitsky1994stable}. 
In this study, we are interested in the centered \emph{symmetric $\alpha$-stable} ($\sas$) distribution, which is a special case of $\alpha$-stable distributions.

The $\sas$ distribution can be seen as a heavy-tailed generalization of the centered Gaussian distribution. The probability density function (pdf) of an $\sas$ distribution cannot be written in closed-form except for certain special cases; however, the characteristic function of the distribution can be written as follows: $X\sim \sas(\sigma) \iff \mathds{E}[\exp(i \omega \x)] = \exp(-|\sigma \omega|^\alpha)$.
Here, $\alpha \in (0,2]$ is called the \emph{characteristic exponent} and determines the tail thickness of the distribution: as $\alpha$ gets smaller, $\sas$ becomes heavier-tailed. 
The parameter $\sigma \in \mathds{R}_+$ is called the \emph{scale} parameter and measures the spread of $X$ around $0$. 

As an important special case of $\sas$, we obtain the Gaussian distribution $\sas(\sigma) = {\cal N}(0,2\sigma^2)$ for $\alpha =2$. 
In Figure~\ref{fig:stablepdf_motion}, we illustrate the (approximately computed) pdf of the symmetric $\alpha$-stable distribution for different values of $\alpha$. As can be clearly observed from the figure, the tails of the distribution vanish quickly when $\alpha =2$ (i.e. Gaussian), whereas the tails get thicker as we decrease $\alpha$.

An important property of the $\alpha$-stable distributions is that their moments can only be defined up to the order $\alpha$, i.e. $\mathds{E}[|X|^p] < \infty$ if and only if $p<\alpha$ for $\alpha \in (0,2)$; implying that $X$ has infinite variance for $\alpha \neq 2$. 
Moreover, even though the pdf of $\sas$ does not admit an analytical form, it is straightforward to draw random samples from stable distributions \cite{chambers1976method}, where efficient implementations are readily available in public software libraries such as the GNU Scientific Library (\url{gnu.org/software/gsl}).

\textbf{SDEs driven by symmetric stable L\'{e}vy processes :}
In this study, we are interested in SDEs driven by symmetric $\alpha$-stable L\'{e}vy processes, which are defined as follows:
\begin{align}
d\x_t =  \ba{\x_{t-}}dt + dL^\alpha_t, \label{eqn:levysde}
\end{align}
where $b$ is called the drift and is chosen as a function of $\alpha$ in our context, and
$\x_{t-}$ will be defined in the sequel. Here, $L^\alpha_t$ denotes the $D$-dimensional $\alpha$-stable L\'{e}vy motion with \emph{independent components}, i.e.\ each component of $L^\alpha_t$ forms an independent \emph{scalar} $\alpha$-stable L\'{e}vy motion, which is defined as follows for $\alpha \in (0,2]$ \cite{duan}:
\begin{enumerate}[label=(\roman*),noitemsep,topsep=0pt,leftmargin=*,align=left]
\item $L_0^\alpha = 0$ almost surely.
\item For $t_0<t_1 < \cdots < t_N$, the increments $ (L_{t_{n}}^\alpha - L_{t_{n-1}}^\alpha )$ are independent ($n = 1,\dots, N$). 
\item The difference $(L_t^\alpha - L_s^\alpha)$ and $L_{t-s}^\alpha$ have the same distribution: $\sas((t-s)^{1/\alpha})$ for $s<t$. 
\item $L_t^\alpha$ has \emph{stochastically continuous} sample paths (i.e.\ continuous in probability): for all $\delta >0$ and $s\geq 0$, $p(|L_t^\alpha - L_s^\alpha| > \delta) \rightarrow 0$ as $t \rightarrow s$.
\end{enumerate}

Due to the stochastic continuity property, $\alpha$-stable L\'{e}vy motions can have a countable number of discontinuities, which are often referred to as `jumps'. As illustrated in Figure~\ref{fig:stablepdf_motion} (bottom), the size of these jumps becomes larger as $\alpha$ get smaller, since $\sas$ becomes heavier tailed.
As a consequence, the sample paths of these processes are continuous from the right and they have left limits at every time \cite{duan}: $\x_{t-}$ hence denotes the left limit of $\x_{t}$ at time $t$.
Therefore, these processes are called c\`{a}dl\`{a}g, i.e.\ the French acronym for \emph{`continue \`{a} droite, limite \`{a} gauche'}.

Similarly to the symmetric $\alpha$-stable distributions, the symmetric $\alpha$-stable L\'{e}vy motions $L_t^\alpha$ coincide with a scaled Brownian motion $\sqrt{2} B_t$ when $\alpha = 2$. This can be simply verified by observing that the difference $L^\alpha_t - L^\alpha_s$ follows a Gaussian distribution ${\cal N}(0,2(t-s))$ and $L_t^\alpha$ becomes almost surely continuous everywhere.

\textbf{Riesz potentials and fractional differentiation: }
Fractional calculus aims to generalize differentiation (and integration) to fractional orders \cite{herrmann2014fractional}. The canonical example of fractional differentiation can be given as the \emph{half-derivative} operator, which coincides with the first-order derivative when applied twice to any function.

In this study, we are interested in \emph{fractional Riesz derivatives} \cite{riesz1949integrale}, which are closely related to $\alpha$-stable distributions. The fractional Riesz derivative directly generalizes the \emph{second-order} differentiation to fractional orders and it is a \emph{non-local} operator. In the one dimensional case, it is defined by the following identity:
\begin{align}
 \D^\gamma f(x) \triangleq {\cal F}^{-1} \{ |\omega|^{\gamma} \hat{f}(\omega) \}, \label{eqn:riesz_def}
\end{align}
where ${\cal F}$ denotes the Fourier transform and $\hat{f}(\omega) = {\cal F} \{ f(x) \}$.
Here, $\gamma > -1$ is the order of the differentiation: for $\gamma \in (-1,0)$ we obtain the Riesz potentials\footnote{For $\gamma<0$, $\D^\gamma$ corresponds to fractional integration. However, we follow the fractional calculus literature and still refer to it as fractional differentiation.}, which will be our main source of interest, and for $\gamma =2$ we obtain the usual second-order differentiation up to a sign difference, i.e.\ $\D^2  f(x) = -d^2 f(x)/ d x^2 $. Note that $\D^1f(x)$ does not coincide with first-order differentiation in general.

\section{Fractional Langevin Monte Carlo}

In this section, we present our main results and construct the proposed Fractional LMC framework step by step. We first develop a L\'{e}vy-driven SDE that targets the correct distribution and analyze the weak convergence properties of its Euler discretization. Afterwards, we develop numerical methods for approximate simulation of the proposed SDE and present formal analysis of the approximation error of the numerical schemes and the weak error analysis of the corresponding Euler discretizations.

In the rest of this paper, we restrict $\alpha$ to be in $(1,2]$ in order the mean of the process to exist. Besides, in all our analyses we focus on the scalar case ($D=1$) for simplicity; however, all our results can be extended for $D>1$. All the proofs are provided in the supplementary document.

\subsection{Invariant measure and weak convergence analysis}
Our first goal is to find a drift $b$ in such a way that the Markov process $(X_t)_{t\geq 0}$ that is a c\`{a}dl\`{a}g solution of the SDE in \eqref{eqn:levysde} would have the target distribution $\pi$ as an invariant distribution. In the following theorem, we present our first main result.
\begin{thm}
\label{thm:ffpe}
Consider the SDE \eqref{eqn:levysde}, where $b$ is defined as: 
\begin{align}
\ba{x} \triangleq \bigl( \D^{\alpha-2} f_\pi(x) \bigr) / \phi(x). \label{eqn:drift}
\end{align}
Here, $f_\pi(x) \triangleq - \phi(x) \partial_x U(x) $ and $\D^\gamma$ is defined in \eqref{eqn:riesz_def}.
Then, $\pi$ is an invariant measure of the Markov process $(\x_t)_{t\geq 0}$ that is a c\`{a}dl\`{a}g solution of the SDE given in \eqref{eqn:levysde}. Furthermore, if $b$ is Lipschitz continuous, then $\pi$ is the unique invariant measure of the process $(\x_t)_{t\geq 0}$.
\end{thm}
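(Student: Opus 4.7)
The plan is to identify $\pi$ as a stationary solution of the fractional Fokker--Planck equation associated with \eqref{eqn:levysde}, and then invoke standard ergodicity results for the uniqueness half. First I would write down the infinitesimal generator of the Markov process. Since the L\'{e}vy symbol of $L_t^\alpha$ is $-|\omega|^\alpha$, by the Fourier characterization in \eqref{eqn:riesz_def} the generator of the pure jump part equals $-\D^\alpha$, so for the SDE \eqref{eqn:levysde} with drift $\ba{\cdot}$ the generator acts on smooth test functions as
\begin{equation*}
\mathcal{A} f(x) = \ba{x}\, \partial_x f(x) - \D^\alpha f(x).
\end{equation*}
Taking the formal $L^2$-adjoint, and using that $\D^\alpha$ is self-adjoint (its Fourier multiplier $|\omega|^\alpha$ is real and even), invariance of a density $p$ is equivalent to the stationary fractional Fokker--Planck equation $\partial_x\bigl(\ba{x}\, p(x)\bigr) + \D^\alpha p(x) = 0$.

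Next I would plug in $p \propto \phi$ together with the candidate drift \eqref{eqn:drift}. The cancellation $\ba{x}\phi(x) = \D^{\alpha-2} f_\pi(x)$ reduces the problem to verifying the operator identity
\begin{equation*}
\partial_x\bigl(\D^{\alpha-2} f_\pi(x)\bigr) = -\D^\alpha \phi(x).
\end{equation*}
The key observation is that, because $\phi = e^{-U}$, one has $f_\pi(x) = -\phi(x)\partial_x U(x) = \partial_x \phi(x)$. Passing to the Fourier side then turns the left-hand side into $(i\omega)\cdot |\omega|^{\alpha-2}\cdot (i\omega)\hat\phi(\omega) = -|\omega|^\alpha \hat\phi(\omega)$, which is precisely the Fourier transform of $-\D^\alpha \phi(x)$. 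Inverting the Fourier transform completes the verification and establishes that $\pi$ is an invariant measure.

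For uniqueness under the additional Lipschitz assumption on $b$, I would appeal to standard theory for symmetric stable-driven SDEs: Lipschitz continuity yields a unique strong c\`{a}dl\`{a}g solution, and the non-degenerate $\alpha$-stable driving noise gives both irreducibility and a strong Feller property of the associated semigroup. Combined with the invariant measure just exhibited, classical ergodic theorems (of Masuda / Kulik type for L\'{e}vy-driven SDEs) force that invariant measure to be unique.

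The main obstacle I anticipate is technical rather than structural: justifying the Fourier-side computation and the integration-by-parts defining the adjoint generator. Both steps require $\phi$, $\partial_x \phi$ and $\D^{\alpha-2}\partial_x\phi$ to decay sufficiently at infinity and to be regular enough for the (non-local) Riesz derivatives to be interpreted pointwise rather than only distributionally. This likely imposes implicit smoothness/integrability conditions on $U$ that should be made explicit, and in the non-smooth range $\alpha\in(1,2)$ one must also be careful that $\D^{\alpha-2}f_\pi$ is well-defined as a Riesz potential, which is where the restriction $\alpha>1$ becomes essential.
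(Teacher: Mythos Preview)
Your proposal is correct and follows essentially the same route as the paper: both plug $\pi\propto\phi$ into the stationary fractional Fokker--Planck equation and reduce to a Fourier-multiplier identity, with the paper phrasing your one-line computation $(i\omega)|\omega|^{\alpha-2}(i\omega)\hat\phi=-|\omega|^\alpha\hat\phi$ as the commutation $\partial_x\D^\gamma=\D^\gamma\partial_x$ together with the semigroup property $\D^2\D^{\alpha-2}=\D^\alpha$. For uniqueness the paper likewise just cites the literature (Schertzer et al.) under the Lipschitz assumption, matching your appeal to standard ergodicity results.
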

The Lipschitz continuity of $\partial_x U$ is a standard condition in LMC for ensuring the uniqueness of the invariant measure, albeit it is often violated in practical applications. In our context, we need $b$ to be Lipschitz continuous for uniqueness, a condition which cannot be easily verified for $\alpha \neq 2$. 
Here, it is also worth noting that when $\alpha \rightarrow 2$, we obtain the classical Langevin diffusion \eqref{eqn:langevin_sde}, as $\lim_{\alpha \rightarrow 2} L_t^\alpha  = \sqrt{2}B_t$ and $\lim_{\alpha \rightarrow 2} \ba{x}  =- \partial_x U(x)$. 

Theorem~\ref{thm:ffpe} suggests that if we could generate continuous sample paths from $(\x_t)_{t\geq 0}$, then we could use them as samples drawn from $\pi$. However, this is not possible since the drift $b$ does not admit an analytical form in general, and even if it could be computed exactly, we still could not simulate the SDE \eqref{eqn:levysde} exactly as it is a continuous-time process.

For now, let us assume that we can exactly compute the drift $b$ and focus on simulating the SDE by considering its Euler-Maruyama discretization \cite{duan,panloup2008recursive}, which is given as follows: 
\begin{align}
\bar{\x}_{n+1} = \bar{\x}_{n} + \eta_{n+1} \ba{\bar{\x}_{n}} + \eta_{n+1}^{1/\alpha} \Delta L^\alpha_{n+1} \label{eqn:euler}
\end{align}
where $n =1,\dots,N$ denotes the time-steps, $N$ is the total number of time-steps (i.e.\ iterations), $(\eta_n)_n$ is a sequence of step-sizes, and $(\Delta L^\alpha_{n})_n$ is a sequence of i.i.d.\ standard symmetric $\alpha$-stable random variables, i.e.\ $\Delta L^\alpha_{n} \sim \sas(1)$. We can clearly observe that this discretization schema is a fractional generalization of ULA given in \eqref{eqn:ula}, where it coincides with ULA when $\alpha = 2$.

The Euler-Maruyama scheme in \eqref{eqn:euler} lets us approximately compute the expectation of a test function $g$ under the target density $\pi$, i.e.\ $\nu(g) \triangleq \int g(\x) \pi(d\x)$, by using sample averages, given as: 
$\bar{\nu}_N(g) \triangleq \frac1{H_N} \sum_{n=1}^N \eta_n g(\bar{\x}_n) $,
where $ H_N = \sum_{n=1}^N \eta_n$. Even though the convergence properties of the estimators obtained via ULA have been well-established \cite{Roberts03,durmus2015non}, it is not clear whether the estimator $\bar{\nu}_N(g)$ converges to the true expectation $\nu(g)$ for $\alpha \neq 2$.

For the convergence analysis, we make use of relatively recent results from the applied probability literature \cite{panloup2008recursive}. In order to establish the convergence of our estimator, we need certain conditions to be satisfied. First, we have a rather standard assumption on the step-sizes:
\begin{assumption}
\label{asmp:stepsize}
$\lim_{n\rightarrow \infty} \eta_n = 0, \quad \lim_{N\rightarrow \infty} H_N  = \infty $, 
\end{assumption}
i.e.\ the step-sizes are required to be decreasing and their sum is required to diverge.
Secondly, we need a more technical Lyapunov condition in order to ensure the stochastic process to be mean-reverting. 
\begin{assumption}
\label{asmp:lyapunov}
Let $V: \mathds{R} \rightarrow \mathds{R}^*_+$ be a function in ${\cal C}^2$, satisfying $\lim_{|x| \rightarrow \infty} V(x) = \infty$, $|\partial_x V| \leq C \sqrt{V}$ for some $C>0$, and $\partial_x^2 V$ is bounded. There exists $a \in (0,1]$, $\delta >0$ and $\beta \in \mathds{R}$, such that $|b|^2 \leq C V^a$ and $ b(\partial_x V) \leq \beta - \delta V^a$, where $b$ is defined in \eqref{eqn:drift}. 
\end{assumption}
Under these conditions, we present the following corollary to Theorem~\ref{thm:ffpe} and \citep[Theorem 2]{panloup2008recursive}, where we establish the weak convergence of the Euler-Maruyama scheme defined in \eqref{eqn:euler}. 
\begin{cor}
\label{cor:euler}
Assume that $b$ is Lipschitz continuous and the conditions \Cref{asmp:stepsize} and \Cref{asmp:lyapunov} hold.  If the test function $g = o\bigl( V^{p/2 + a -1} \bigr)$ with $p\in (0,1/{2}]$, then the following holds: 
\begin{align*}
 \lim_{N \rightarrow \infty} \bar{\nu}_N(g) = \nu(g), \quad \text{ almost surely.}
\end{align*} 
\end{cor}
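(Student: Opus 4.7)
The strategy is direct: combine \Cref{thm:ffpe}, which supplies the invariant measure $\pi$, with the general weak convergence theorem for Euler–Maruyama schemes of L\'evy-driven SDEs due to \citet{panloup2008recursive}. So the plan has two layers: (i) produce the invariant measure and (ii) verify the hypotheses of the abstract ergodic theorem from the applied probability literature.

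First I would record the set-up. By \Cref{thm:ffpe} and the standing assumption that $b$ is Lipschitz, the SDE \eqref{eqn:levysde} admits $\pi$ as its \emph{unique} invariant measure. The Euler–Maruyama recursion \eqref{eqn:euler} is the standard decreasing-step discretisation of \eqref{eqn:levysde}, and $\bar{\nu}_N(g)$ is exactly the weighted empirical measure built from the iterates $\bar{\x}_n$ with weights $\eta_n/H_N$. Panloup's framework is designed for precisely this object: under a Lipschitz drift, a Lyapunov condition of the form in \Cref{asmp:lyapunov}, step-size conditions as in \Cref{asmp:stepsize}, and an integrability condition on the driving L\'evy process, one obtains $\bar{\nu}_N(g) \to \nu(g)$ almost surely for test functions $g$ growing no faster than a suitable power of $V$.

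Next I would verify the hypotheses one by one. \Cref{asmp:stepsize} matches the step-size hypothesis in \citep[Theorem~2]{panloup2008recursive} verbatim. \Cref{asmp:lyapunov} supplies both the mean-reversion bound $b(\partial_x V) \leq \beta - \delta V^{a}$ and the growth bound $|b|^2 \leq C V^{a}$, which are exactly the Lyapunov ingredients required there. The remaining ingredient is a moment condition on $L^\alpha_t$: Panloup requires the driving process to have finite moment of some order $p > 0$. Since $\Delta L^\alpha_n \sim \sas(1)$ has finite $p$-th moment precisely for $p < \alpha$, and we have restricted $\alpha \in (1,2]$, every $p \in (0, 1/2]$ satisfies $p < \alpha$, so the moment hypothesis holds. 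The growth condition $g = o(V^{p/2 + a - 1})$ is then the admissible test-function class from the same theorem with this particular choice of $p$.

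The genuine obstacle, and the reason the hypotheses of \Cref{asmp:lyapunov} are phrased the way they are, is that the Brownian It\^o calculus used in the classical Roberts–Tweedie analysis is unavailable: the generator of \eqref{eqn:levysde} is a non-local fractional operator driven by a process of infinite variance. What Panloup's argument does, and what I would simply invoke, is to control the empirical occupation measure by combining the Lyapunov drift condition with a martingale decomposition of $V(\bar{\x}_n)$ in which the jump part is bounded using the truncated $p$-th moments of $\Delta L^\alpha_n$; the step-size condition $H_N \to \infty$ together with $\eta_n \to 0$ then gives tightness and the almost-sure identification of any limit point with $\pi$, the latter being unique by \Cref{thm:ffpe}. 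Assembling these pieces yields the claimed limit, completing the proof.
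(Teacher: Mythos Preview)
Your approach is essentially the paper's: invoke \Cref{thm:ffpe} for uniqueness of $\pi$ and then apply \citep[Theorem~2]{panloup2008recursive} after checking its hypotheses. The one point where your verification is thinner than the paper's is the L\'evy-measure condition. Panloup's theorem does not ask merely for a finite $p$-th moment of $\Delta L^\alpha_n$; it asks for two integrability conditions on the L\'evy measure $v(x)=|x|^{-(\alpha+1)}$, namely $\int_{|x|>1}|x|^{2p}\,v(dx)<\infty$ for some $p\in(0,1/2]$ and $\int_{|x|\le 1}|x|^{2q}\,v(dx)<\infty$ for some $q\in[1/2,1]$. You address only the tail side (and with a slightly loose exponent: the relevant bound is $2p<\alpha$, not $p<\alpha$, though both hold for $p\le 1/2$ since $\alpha>1$). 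The small-jump side, which forces $q>\alpha/2$ and hence $q\in(\alpha/2,1]$, is what the paper checks explicitly and what your write-up omits; without it the invocation of Panloup's theorem is incomplete.
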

This corollary shows that under certain regularity and Lyapunov conditions, the Euler-Maruyama scheme in \eqref{eqn:euler} still weakly converges for $\alpha \neq 2$, as long as the drift can be computed exactly. Note that we consider the Lipschitz condition for ensuring the uniqueness of the invariant measure; however, this is not a crucial assumption as one can show that every weak limit of
the sequence $\{\bar{\nu}_N\}_N$ is an invariant probability for the SDE in \eqref{eqn:levysde}.

\subsection{Numerical approximation}

Even though Corollary~\ref{cor:euler} ensures the weak convergence of the Euler scheme, its practical implication is somewhat limited since the Riesz derivatives cannot be computed exactly in general. 
In this section, we develop and analyze numerical methods for approximately computing the drift $b$. 

In \cite{ortigueira2006riesz}, it has been shown that for $\gamma \in (-1,2)$, the Riesz derivative $\D^\gamma$ of a function $f(x)$ can be defined as the limit of the \emph{fractional centered difference} operator $\Delta_h^\gamma$, given as:
$\D^\gamma f(x) = \lim_{h\rightarrow 0} \Delta_h^\gamma f(x)$, where
\begin{align}
  \Delta_h^\gamma f(x) &\triangleq (1/{h^\gamma}) \sum\nolimits_{k=-\infty}^\infty g_{\gamma,k} f(x-kh), \label{eqn:delta_h} 
\end{align}
and $g_{\gamma,k} \triangleq  (-1)^k \Gamma(\gamma+1)/( \Gamma(\frac{\gamma}{2}-k+1 ) \Gamma(\frac{\gamma}{2}+k+1 ) )$. 
By using the above definition, we can rewrite our drift as: 
$\ba{x} = \bigl( \lim_{h\rightarrow 0} \Delta_h^{\alpha-2} f_\pi(x)\bigr)/ \phi(x)$, 
where $f_\pi(x)$ is defined in Theorem~\ref{thm:ffpe}.

We now propose our first numerical scheme for approximating the drift $b$ by following \citet{ccelik2012crank}:
\begin{align}
\ba{x} \approx \bta{x} \triangleq \bigl(\Delta_{h,K}^{\alpha-2} f_\pi(x) \bigr)/ \phi(x),  \label{eqn:approxriesz}
\end{align}
where $\Delta_{h,K}^{\gamma}$ is the \emph{truncated} fractional central difference operator, defined as follows:
\begin{align}
\Delta_{h,K}^{\gamma} f(x) \triangleq (1/{h^\gamma}) \sum\nolimits_{k=-K}^K g_{\gamma,k} f(x-kh).
\end{align}
Here, we merely replaced the Riesz derivative with the central difference operator where we fixed $h$ and truncated the infinite summation in order the numerical scheme to be computationally tractable. 
We provide a numerically stable implementation of \eqref{eqn:approxriesz} in the supplementary document. 

The scheme in \eqref{eqn:approxriesz} provides us a practical way for approximately computing the drift. However, for fixed $h$ and $K$, this approach would yield a certain approximation error and therefore Corollary~\ref{cor:euler} would no longer hold if we replace $b$ by $\tilde{b}$ in \eqref{eqn:euler}. Throughout this section, we analyze this approximation error and the weak error of the Euler scheme with the approximate drift.

 We first analyze the approximation error of our numerical scheme in \eqref{eqn:approxriesz}. Since $\phi(x)$ is constant for a given $x$, we focus on $|\D^\gamma f_\pi(x) - \Delta_{h,K}^{\gamma} f_\pi(x)|$.
Here, we first need a technical regularity condition on $f_\pi$.
\begin{assumption}
\label{asmp:f_reg}
$f_\pi(x) \in {\cal C}^3 (\mathds{R})$ and all derivatives up to order three belong to ${\cal L}_1(\mathds{R})$.
\end{assumption}
We need an additional assumption on $f_\pi$, which ensures the tails of the target distribution $\pi$ vanish sufficiently quickly. 
\begin{assumption}
\label{asmp:tail}
$| f_\pi(x-kh)| \leq C \exp(-|k|h) $ for some $C >0$ and $|k| > K$ for some $K \in \mathds{N}_+$.
\end{assumption}
Now, we present our second main result.
\begin{thm}
\label{thm:riesz_full}
Assume that the conditions \Cref{asmp:f_reg} and \Cref{asmp:tail} hold. Then, for $\gamma \in (-1,0)$, the following bound holds:
\begin{align}
\bigl|\D^\gamma f_\pi(x) - \Delta_{h,K}^{\gamma} f_\pi(x) \bigr| = {\cal O}\bigl(h^2 + 1/(hK)\bigr), \label{eqn:bound_deltah}
\end{align}
as $h$ goes to zero.
\end{thm}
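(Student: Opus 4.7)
The plan is to decompose the error via the triangle inequality,
\begin{align*}
 \bigl|\D^\gamma f_\pi - \Delta_{h,K}^{\gamma} f_\pi \bigr| \leq \bigl|\D^\gamma f_\pi - \Delta_{h}^{\gamma} f_\pi \bigr| + \bigl|\Delta_h^\gamma f_\pi - \Delta_{h,K}^\gamma f_\pi\bigr|,
\end{align*}
suppressing the common argument $(x)$, and then to show that the first piece (a pure discretization error) is $\mathcal{O}(h^2)$ while the second (a pure truncation error) is $\mathcal{O}(1/(hK))$. Adding the two estimates then yields the claimed bound.

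For the $\mathcal{O}(h^2)$ piece I would work on the Fourier side, using the identity $\sum_k g_{\gamma,k}\, e^{-ik\theta} = |2\sin(\theta/2)|^\gamma$, so that the multiplier of $\Delta_h^\gamma$ is $|2\sin(\omega h/2)/h|^\gamma$. A Taylor expansion in $\omega h$ yields $|\omega|^\gamma\bigl(1 + \mathcal{O}((\omega h)^2)\bigr)$, hence a pointwise symbol error against $|\omega|^\gamma$ of order $|\omega|^{\gamma+2} h^2$. Assumption H3 provides $L^1$ bounds on $f_\pi$ and its first three derivatives, which force $|\hat{f}_\pi(\omega)| = \mathcal{O}(|\omega|^{-3})$; combined with $\gamma < 0$ this makes $|\omega|^{\gamma+2} \hat{f}_\pi(\omega)$ integrable, so that Fourier inversion converts the symbol estimate into the desired pointwise $\mathcal{O}(h^2)$ bound. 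This step essentially mirrors the argument of \citet{ccelik2012crank} adapted to our smoothness class.

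The truncation piece is the more delicate part. I would first obtain the large-$k$ asymptotic of the coefficients: applying the reflection formula $\Gamma(1-z)\Gamma(z) = \pi/\sin(\pi z)$ to $\Gamma(\gamma/2 - k + 1)$ rewrites it through $\Gamma(k-\gamma/2)$, after which the Stirling ratio $\Gamma(k+a)/\Gamma(k+b) \sim k^{a-b}$ gives $|g_{\gamma,k}| = \mathcal{O}(|k|^{-\gamma-1})$. Combined with Assumption H4, the tail contribution then satisfies
\begin{align*}
 h^{-\gamma}\sum_{|k|>K} |g_{\gamma,k}|\, |f_\pi(x-kh)| \leq \frac{C}{h^\gamma K^{\gamma+1}}\sum_{k>K} e^{-kh} \leq \frac{C\, e^{-Kh}}{(hK)^{\gamma+1}},
\end{align*}
using $k^{-\gamma-1} \leq K^{-\gamma-1}$ for $k>K$ (valid because $\gamma+1>0$), geometric summation, and $1-e^{-h} \geq h/2$. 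A short case split on $hK \leq 1$ versus $hK > 1$ converts this into $\mathcal{O}(1/(hK))$: in the first regime $(hK)^{\gamma+1} \geq hK$ since $\gamma+1<1$, while in the second the factor $e^{-Kh}(hK)^{-\gamma}$ stays bounded because the exponential crushes the polynomial. The main obstacle I anticipate is tracking the asymptotics of $g_{\gamma,k}$ cleanly, as the reflected Gamma has poles to navigate and the Stirling ratio must be applied with enough care to isolate the $|k|^{-\gamma-1}$ rate together with the right sign; the $\mathcal{O}(h^2)$ piece is by contrast a standard Fourier-side computation once H3 is in place.
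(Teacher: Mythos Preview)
Your decomposition and the Fourier-side treatment of the $\mathcal{O}(h^2)$ discretization term match the paper's proof essentially line for line; the paper likewise adapts the argument of \c{C}elik and Duman to $\gamma\in(-1,0)$ and uses the $(1+|\omega|)^{-3}$ decay of $\hat f_\pi$ coming from \Cref{asmp:f_reg}. For the truncation term the paper uses the same two inputs---the coefficient asymptotic $g_{\gamma,k}=\mathcal{O}(k^{-\gamma-1})$ (which it simply cites from the literature rather than deriving via reflection and Stirling as you propose) and the exponential tail \Cref{asmp:tail}---but closes the estimate differently: it dominates the tail sum by the integral $\int_{hK}^\infty y^{-(\gamma+1)}e^{-y}\,dy=\Gamma(-\gamma,hK)$ and then quotes an incomplete-gamma bound of Borwein and Chan to read off $\mathcal{O}(1/(hK))$. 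Your alternative---pulling out $K^{-(\gamma+1)}$, summing the geometric series, and case-splitting on $hK\lessgtr 1$---is more elementary and self-contained, so the obstacle you anticipated (tracking the $g_{\gamma,k}$ asymptotics) is in fact the only place where you do more work than the paper, while your tail summation avoids the external reference altogether.
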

Theorem~\ref{thm:riesz_full} shows that the error induced by our numerical approximation scheme is bounded and can be made arbitrarily small by decreasing $h$ and increasing $K$. 
We can also observe that for fixed $K$, the optimal $h = {\cal O}(K^{-1/3})$. 

The hidden constant in the right hand side of \eqref{eqn:bound_deltah} is allowed to depend on $x$ and let it be denoted as $C(x)$. In order to ease the analysis, in the rest of the paper we will assume that $\sup_x C(x)/\phi(x) < \infty$, so that Theorem~\ref{thm:riesz_full} can be directly used for bounding the error $|b- \tilde{b}|$ for any $x$. Note that this a mild assumption and holds trivially when $\x$ belongs to a bounded domain (e.g.\ the setting in \cite{wang2015privacy}). 

We now consider the following Euler-Maruyama discretization of \eqref{eqn:levysde} with the approximate drift:
\begin{align}
\tilde{\x}_{n+1} = \tilde{\x}_{n} + \eta_{n+1} \bta{\tilde{\x}_{n}} + \eta_{n+1}^{1/\alpha} \Delta L^\alpha_{n+1}, \label{eqn:euler_approx}
\end{align}
where the corresponding estimator is defined as: $\tilde{\nu}_N(g) \triangleq (1/H_N) \sum_{n=1}^N \eta_n g(\tilde{\x}_n)$. Even in this approximate Euler-scheme, we still obtain ULA as a special case of \eqref{eqn:euler_approx}, as we have $\lim_{\alpha \rightarrow 2} \bta{x} = -\partial_x U(x)$.

As opposed to $\bar{\nu}_N(g)$, $\tilde{\nu}_N(g)$ does not converge to $\nu(g)$ due to the error induced by the numerical approximation. However, fortunately, the weak error of this Euler scheme can still be bounded, as we show in the following theorem. For this result, we need an additional ergodicity condition\footnote{Proving the ergodicity of the SDEs in \Cref{assumption:ergo} is beyond the scope of this study; more information can be found in \cite{masuda2007ergodicity}. }. 
\begin{assumption}
\label{assumption:ergo}
The SDE \eqref{eqn:levysde} and $dX_t = \bta{X_t}dt + dL^\alpha_t$ are geometrically ergodic with their unique invariant measures.
\end{assumption}
\begin{thm}
\label{thm:euler_approx}
Assume that the conditions \Cref{asmp:stepsize,asmp:f_reg,assumption:ergo} hold, \Cref{asmp:lyapunov} holds for $\tilde{b}$,
and $K$ is chosen in such a way that \Cref{asmp:tail} holds for any $x$. Finally, assume that $|\partial_x g|$ is bounded. Then, the following bound holds almost surely:
\begin{align}
\bigl|\nu(g) - \lim\nolimits_{N \rightarrow \infty} \tilde{\nu}_N(g)\bigr| = {\cal O}\bigl(h^2 + 1/(hK) \bigr).
\end{align}
\end{thm}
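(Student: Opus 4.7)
The plan is to decompose the error into (i) the gap between the empirical estimator and the invariant measure of the \emph{approximate} SDE, and (ii) the gap between the two invariant measures, one driven by $b$ and one by $\tilde{b}$. The first piece vanishes in the limit by the convergence result we already have, and the second piece is controlled by the uniform drift error bound from Theorem~\ref{thm:riesz_full}.

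First, I would apply Corollary~\ref{cor:euler} to the SDE $dX_t = \tilde{b}(X_t,\alpha)dt + dL^\alpha_t$. Under \Cref{asmp:stepsize}, \Cref{asmp:lyapunov} (which is imposed on $\tilde b$ by assumption), and the ergodicity of this approximate SDE from \Cref{assumption:ergo}, the corollary yields $\lim_{N\to\infty}\tilde\nu_N(g) = \tilde\nu(g)$ almost surely, where $\tilde\nu$ denotes integration against the unique invariant measure $\tilde\pi$ of the approximate SDE. The theorem is therefore reduced to proving the deterministic estimate $|\nu(g)-\tilde\nu(g)| = \mathcal{O}(h^2 + 1/(hK))$.

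To bound $|\nu(g)-\tilde\nu(g)|$, I would use the standard Poisson-equation / generator-comparison trick. Let $\mathcal{L}$ and $\tilde{\mathcal{L}}$ denote the (integro-differential) infinitesimal generators of the true and approximate SDEs, respectively. Both generators contain the same fractional Laplacian piece coming from $L^\alpha_t$, so their difference is purely the drift term: $(\mathcal{L}-\tilde{\mathcal{L}})\psi(x) = (b(x,\alpha)-\tilde b(x,\alpha))\,\partial_x \psi(x)$. Solve the Poisson equation $\mathcal{L}\psi = g - \nu(g)$ for some $\psi$, which is legitimate because $\mathcal{L}$ is geometrically ergodic with invariant measure $\pi$ (\Cref{assumption:ergo}) and $g$ has bounded derivative. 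Integrating this equation against $\tilde\pi$ and using that $\tilde\nu(\tilde{\mathcal{L}}\psi)=0$ by stationarity of $\tilde\pi$, one obtains
\begin{align*}
\tilde\nu(g)-\nu(g) \;=\; \tilde\nu\bigl(\mathcal{L}\psi\bigr) \;=\; \tilde\nu\bigl((\mathcal{L}-\tilde{\mathcal{L}})\psi\bigr) \;=\; \tilde\nu\bigl((b-\tilde b)\,\partial_x\psi\bigr).
\end{align*}
Taking absolute values and applying a sup-norm bound gives $|\nu(g)-\tilde\nu(g)| \leq \|b-\tilde b\|_\infty \cdot \|\partial_x\psi\|_\infty$. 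The drift error is controlled by Theorem~\ref{thm:riesz_full} combined with the standing assumption $\sup_x C(x)/\phi(x)<\infty$, giving $\|b-\tilde b\|_\infty = \mathcal{O}(h^2+1/(hK))$. Multiplying by the (finite) constant $\|\partial_x\psi\|_\infty$ produces the advertised rate.

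The main obstacle is the regularity control $\|\partial_x\psi\|_\infty<\infty$ for the Poisson solution $\psi$ associated with a L\'evy-driven non-local generator $\mathcal{L}$. The Brownian case is classical, but for $\alpha$-stable noise one needs the geometric ergodicity in \Cref{assumption:ergo} together with the boundedness of $|\partial_x g|$ to get an $L^\infty$ bound on $\partial_x\psi$ via the standard representation $\psi(x) = -\int_0^\infty \mathbb{E}_x[g(X_t)-\nu(g)]\,dt$ and a pathwise differentiation in the initial condition (or, equivalently, gradient estimates on the transition semigroup that decay exponentially under geometric ergodicity). Once this regularity is in hand, everything else is an application of the previously established results. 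A minor additional step is verifying that $\tilde b$ is regular enough for the Poisson and ergodicity machinery to apply; this follows because $\tilde b$ is a finite linear combination (over $k=-K,\ldots,K$) of translates of $f_\pi/\phi$, so it inherits smoothness from $f_\pi$ via \Cref{asmp:f_reg}.
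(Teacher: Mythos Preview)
Your proposal is correct and shares the same high-level decomposition as the paper: split the error into (a) $|\tilde\nu(g)-\lim_N\tilde\nu_N(g)|$, which vanishes by applying the Panloup-type convergence (Corollary~\ref{cor:euler}) to the approximate SDE, and (b) $|\nu(g)-\tilde\nu(g)|$, which is controlled by the uniform drift error $\|b-\tilde b\|_\infty=\mathcal{O}(h^2+1/(hK))$ from Theorem~\ref{thm:riesz_full}. Where you differ is in the mechanism for step (b). You use the stationary Poisson equation $\mathcal{L}\psi=g-\nu(g)$ and the identity $\tilde\nu(g)-\nu(g)=\tilde\nu\bigl((b-\tilde b)\,\partial_x\psi\bigr)$. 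The paper instead works at finite time via the semigroup interpolation
\[
\mathds{E}[g(X_t)]-\mathds{E}[g(Y_t)]=\int_0^t \pb_s\bigl(b-\tilde b\bigr)\,\partial_x\,\pt_{t-s}g\;ds,
\]
bounds the integrand using the geometric decay of $\pb_s$ and $\pt_{t-s}$ from \Cref{assumption:ergo}, and then lets $t\to\infty$ so that the left-hand side becomes $\nu(g)-\tilde\nu(g)$. Both routes hinge on the same observation that the two generators differ only in the drift term, and both face the same technical obstacle you flagged: one needs a gradient bound on the semigroup (the paper commutes $\partial_x$ with $\pt_{t-s}$ under an interchangeability assumption, which is exactly your $\|\partial_x\psi\|_\infty<\infty$ requirement in integrated form, since $\psi(x)=-\int_0^\infty(\pb_t g(x)-\nu(g))\,dt$). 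Your Poisson-equation route is arguably more transparent about what regularity is being used; the paper's semigroup route avoids explicitly constructing $\psi$ and packages everything into a single lemma bounding $|\mathds{E}[g(X_t)-g(Y_t)]|$ uniformly in $t$.
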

This theorem shows that the weak error of the discretization in \eqref{eqn:euler_approx} is dominated by the numerical error induced by $\tilde{b}$ and can be made arbitrarily small by tuning $h$ and $K$.

\subsection{Multidimensional case}

Even though we have focused on the scalar case (i.e.\ $D=1$) so far, we can generalize the presented results to vector processes by using the same proof strategies since the components of $L^\alpha_t$ are independent\footnote{While extending our results to $D>1$, the independence of the components of $L^\alpha_t$ turns out to be a crucial requirement since the spectral measure of the corresponding multivariate stable distribution becomes discrete \cite{nolan2008overview}. Our results cannot be directly extended to SDEs that are driven by other multivariate stable processes, such as isotropic stable processes \cite{nolan2013multivariate}.}.  For $D>1$, the drift turns out to be a multidimensional generalization of \eqref{eqn:drift} and has the following form: (for $d = 1,\dots,D$)
\begin{align}
[\ba{x}]_d = \D^{\alpha-2}_{x_d} \{-\phi(x)  \partial_{x_d} U(x) \}/\phi(x), \label{eqn:drift_nd}
\end{align}
where $[v]_i$ denotes the $i$'th component of a vector $v$ and ${\cal D}^{\alpha}_{x_d}$ denotes the \emph{partial} fractional Riesz derivative along the direction $x_d$ \cite{ortigueira2014}. 
With this definition of the multidimensional drift, similar to the scalar case, we obtain the classical Langevin equation as a special case of \eqref{eqn:levysde}, since $\lim_{\alpha \rightarrow 2} \ba{x} =  - \nabla U(x)$ for $D>1$.

In applications, we can approximate \eqref{eqn:drift_nd} by applying the same numerical technique presented in \eqref{eqn:approxriesz} to each dimension $d$. However, for large $D$, this approach would be impractical since it would require the fractional derivatives to be computed $D$ times at each iteration. 

In this section, we propose a second scheme for approximating the fractional Riesz derivatives. The current approach is a computationally more efficient variant of the first numerical scheme presented in \eqref{eqn:delta_h} and it is given as follows: 
${\cal D}^\gamma f_\pi(x) \approx g_{\gamma,0} f_\pi(x)$, where $g_{\gamma,0} = \Gamma(\gamma+1)/\Gamma(\frac{\gamma}{2}+1)^2$ for $x \in \mathds{R}$.
In other words, we approximate the fractional derivatives by using only the first term of the centered difference operator defined in \eqref{eqn:delta_h}. When all the partial fractional derivatives in \eqref{eqn:drift_nd} are approximated with this approach, the multidimensional drift greatly simplifies and has the following form: (for $D>1$)
\begin{align}
\ba{\x}\approx \bha{\x} \triangleq  - c_\alpha \nabla U(\x), \label{eqn:drift_final}
\end{align}
where $c_\alpha \triangleq {\Gamma(\alpha-1)}/{\Gamma(\alpha/2)^2}$. We finally consider a discretization of \eqref{eqn:levysde} where the drift is approximated by \eqref{eqn:drift_final} and ultimately propose the Fractional Langevin Algorithm (FLA), defined as follows:
\begin{align}
\hat{\x}_{n+1} = \hat{\x}_{n} - \eta_{n+1} c_\alpha \nabla U(\hat{\x}_n) + \eta_{n+1}^{1/\alpha} \Delta L^\alpha_{n+1}. \label{eqn:euler_approx_hat}
\end{align}
Similar to the previous discretization schemes given in \eqref{eqn:euler} and \eqref{eqn:euler_approx}, FLA generalizes ULA as well, since $\lim_{\alpha \rightarrow 2} \bha{x} = -\nabla U(x)$. Besides, FLA has the exact same computational complexity as ULA, since it only requires to compute $\nabla U$ and generate $\Delta L^\alpha_{n}$. 
Another interesting observation is that $c_\alpha$ increases as $\alpha$ decreases, implying that FLA tends to increase the `weight' of the gradient as the driving process becomes heavier-tailed.

We now present our last theoretical result where we analyze the approximation error of the simplified scheme for $D=1$, and present it as a corollary to Theorems~\ref{thm:riesz_full} and \ref{thm:euler_approx}.
\begin{cor}
\label{cor:riesz_simple}
Assume that the conditions \Cref{asmp:f_reg,asmp:tail} hold. Let $K_x \in \mathds{N}_+$ be a value that satisfies \Cref{asmp:tail} for a given $x$, and let 
$\rmd: \mathds{R} \rightarrow \mathds{R}_+$ be a function defined as:
\begin{align*}
\rmd(x) \triangleq \Bigl|\sum\nolimits_{k \in \llbracket -K_x,K_x \rrbracket \setminus 0 } \frac{g_{\gamma,k} f_\pi(x-kh)}{g_{\gamma,0} f_\pi(x)} +1 \Bigr|^{1/\gamma}. 
\end{align*}
Then, for $-1<\gamma<0$, the following bound holds:
\begin{align*}
\Bigl|\D^\gamma f_\pi(x) - g_{\gamma,0} f_\pi(x) \Bigr| = {\cal O}\bigl( \rmd^2(x) + 1/(\rmd(x) K_x) \bigr).
\end{align*} 
Furthermore, if \Cref{asmp:stepsize} holds, \Cref{asmp:lyapunov,assumption:ergo} hold for $\hat{b}$, and $|\partial_x g|$ is bounded, then, the following bound holds:
\begin{align*}
\bigl|\nu(g) - \lim_{N \rightarrow \infty} \hat{\nu}_N(g)\bigr| = {\cal O}\bigl( \rmd^2(x^*) + 1/(\rmd(x^*) K_{x^*}) \bigr)
\end{align*}
almost surely, where $x^* = \argmax_x [ \rmd^2(x) + 1/(\rmd(x) K_x)]$ and $\hat{\nu}_N(g) \triangleq (1/H_N) \sum_{n=1}^N \eta_n g(\hat{\x}_n)$.
\end{cor}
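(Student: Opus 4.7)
\textbf{Plan for Corollary~\ref{cor:riesz_simple}.} My strategy is to realize the simplified approximation $g_{\gamma,0} f_\pi(x)$ as a \emph{special instance} of the truncated fractional central difference $\Delta_{h,K}^{\gamma} f_\pi(x)$ from Theorem~\ref{thm:riesz_full}, with a cleverly chosen, $x$-dependent step-size $h = h(x)$. Once that identification is made, the first pointwise bound falls straight out of Theorem~\ref{thm:riesz_full}, and the second bound on the invariant-measure estimator follows by plugging the resulting (now $x$-dependent) drift error into the machinery of Theorem~\ref{thm:euler_approx} and taking a supremum.

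\textbf{Step 1: identifying the hidden step-size.} I would start by writing
\[
\Delta_{h,K_x}^{\gamma} f_\pi(x) = \frac{1}{h^\gamma}\Bigl[ g_{\gamma,0} f_\pi(x) + \sum_{k \in \llbracket -K_x,K_x\rrbracket \setminus 0} g_{\gamma,k} f_\pi(x-kh)\Bigr]
\]
and asking for which $h$ this equals $g_{\gamma,0} f_\pi(x)$. Solving this relation for $h^\gamma$ and then taking $|\cdot|^{1/\gamma}$ reproduces precisely the definition of $\rmd(x)$ in the statement. Hence $h = \rmd(x)$ is (implicitly) defined as the value for which $\Delta_{\rmd(x),K_x}^{\gamma} f_\pi(x) = g_{\gamma,0} f_\pi(x)$.

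\textbf{Step 2: pointwise bound via Theorem~\ref{thm:riesz_full}.} With the identification above, I apply Theorem~\ref{thm:riesz_full} pointwise at $x$ with $h = \rmd(x)$ and truncation $K_x$. Assumptions~\Cref{asmp:f_reg,asmp:tail} are inherited from the hypothesis, so
\[
\bigl|\D^\gamma f_\pi(x) - g_{\gamma,0} f_\pi(x)\bigr| = \bigl|\D^\gamma f_\pi(x) - \Delta_{\rmd(x),K_x}^{\gamma} f_\pi(x)\bigr| = {\cal O}\bigl(\rmd^2(x) + 1/(\rmd(x) K_x)\bigr),
\]
which is the first claim. Dividing by $\phi(x)$ (uniformly controlled by the blanket assumption $\sup_x C(x)/\phi(x) < \infty$ introduced just after Theorem~\ref{thm:riesz_full}) gives the same rate for $|b(x,\alpha) - \hat{b}(x,\alpha)|$.

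\textbf{Step 3: weak error via Theorem~\ref{thm:euler_approx}, and the main obstacle.} The argument of Theorem~\ref{thm:euler_approx} bounds the asymptotic gap $|\nu(g) - \lim_N \tilde{\nu}_N(g)|$ by a uniform drift approximation error. To reuse that machinery for $\hat{b}$, I would replace the uniform error by $\sup_x[\rmd^2(x) + 1/(\rmd(x)K_x)]$, which is attained at the $\argmax$ point $x^*$; under \Cref{asmp:stepsize}, \Cref{asmp:lyapunov} and \Cref{assumption:ergo} for $\hat{b}$, together with the bounded $|\partial_x g|$ hypothesis, this yields the stated bound. The main obstacle I expect is the well-posedness of the implicit definition of $\rmd(x)$: one must verify that the fixed-point equation
\[
h^\gamma = 1 + \frac{1}{g_{\gamma,0} f_\pi(x)}\sum_{k \in \llbracket -K_x,K_x\rrbracket \setminus 0} g_{\gamma,k} f_\pi(x-kh)
\]
admits a positive real solution and that the correct branch is selected by the $|\cdot|^{1/\gamma}$ with $\gamma \in (-1,0)$. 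A secondary subtlety is ensuring that $\sup_x [\rmd^2(x) + 1/(\rmd(x)K_x)]$ is finite and attained so that $x^*$ is well-defined; this likely needs an additional tail/compactness argument on $f_\pi$ beyond what is stated, or a localization to the effective support of $\pi$.
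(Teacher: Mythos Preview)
Your proposal is correct and follows essentially the same route as the paper: identify $g_{\gamma,0}f_\pi(x)$ with $\Delta_{h,K_x}^{\gamma} f_\pi(x)$ at the special step $h=\rmd(x)$, invoke Theorem~\ref{thm:riesz_full} for the pointwise bound, and then rerun the Theorem~\ref{thm:euler_approx} argument with the supremum error to get the second claim. The subtleties you flag about the implicit (fixed-point) nature of $\rmd(x)$ and the attainment of $x^*$ are genuine, but the paper's own proof does not address them either; it simply asserts the identification and defers to the earlier theorems.
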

Here, the term $\rmd(x)$ plays a similar role as the parameter $h$ in \eqref{eqn:delta_h}. 
This corollary shows that the approximation quality of \eqref{eqn:drift_final} may vary depending on the particular $x$ where $\D^\gamma f_\pi(x)$ is evaluated, and depending on the values of $K_x$ and $\rmd(x)$, $\hat{b}$ might even provide more accurate approximations than $\tilde{b}$ does. As a result, we observe that the weak error is dominated by the largest numerical error induced by $\hat{b}$. 
On the other hand, even if $\hat{b}$ would have a higher approximation error when compared to $\tilde{b}$, we would expect that the scheme in \eqref{eqn:euler_approx_hat} to be better behaved than \eqref{eqn:euler_approx}, since it is less prone to numerical instability.

\subsection{Large-scale Bayesian posterior sampling}

In Bayesian machine learning, the target distribution $\pi$ is often chosen as the Bayesian posterior: $\pi(\x) = p(\x| \Y)$, where $\Y \equiv \{\Y_i\}_{i=1}^{N_\Y}$ is a set of observed i.i.d.\ data points. 
This choice of the target distribution imposes the following form on the potential energy: 
$
  U(\x) = - (\sum_{i=1}^{N_\Y} \log p (\Y_i|\x) + \log p (\x) )
$,
where $p(\Y_i|\x)$ is the likelihood function and $p(\x)$ is the prior distribution.

In large scale applications, $N_{\Y}$ becomes very large and therefore computing $\nabla U$ at each iteration can be computationally inhibitive. Inspired by the Stochastic Gradient Langevin Dynamics (SGLD) algorithm \cite{WelTeh2011a}, which extends ULA to large-scale settings, we extend FLA by replacing the exact gradients $\nabla U$ in \eqref{eqn:euler_approx_hat} with an unbiased estimator, given as follows:
\begin{align*}
\nabla \tilde{U}_{n}(\x) = -[\nabla \log p(\x)  + \frac{N_\Y}{N_\Omega} \sum_{i \in \Omega_n} \nabla \log p (\Y_i | \x)],
\end{align*}
where $\Omega_n \subset \{1,\dots,N_\Y \}$ is a random data subsample that is drawn with replacement at iteration $n$ and $N_\Omega \ll N_\Y$ denotes the number of elements in $\Omega$. We call the resulting algorithm Stochastic Gradient FLA (SG-FLA). Note that SG-FLA coincides with SGLD when $\alpha = 2$. We leave the convergence analysis of SG-FLA as a future work.  

\begin{figure}[t]
\centering
\hfill \includegraphics[width=0.985\columnwidth]{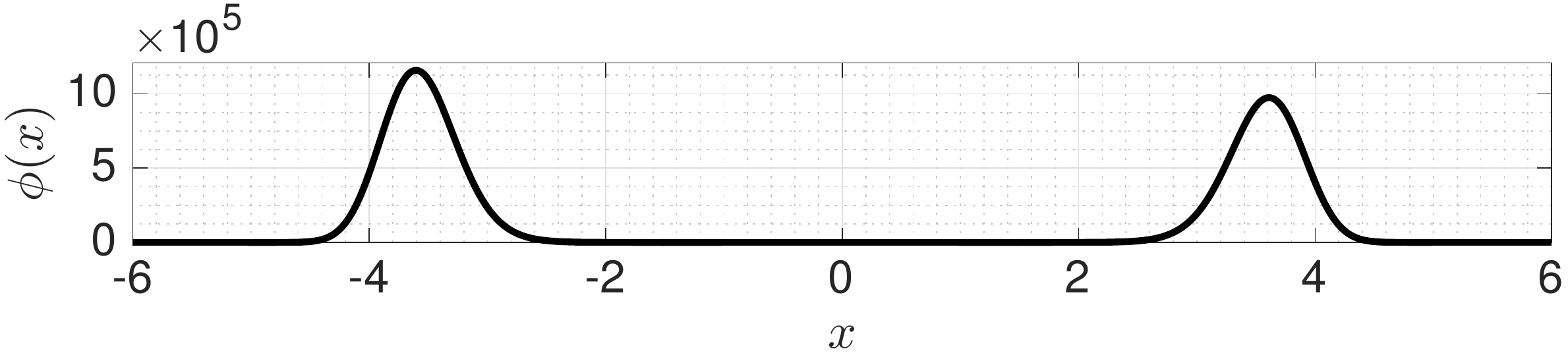} \\
\includegraphics[width=1\columnwidth]{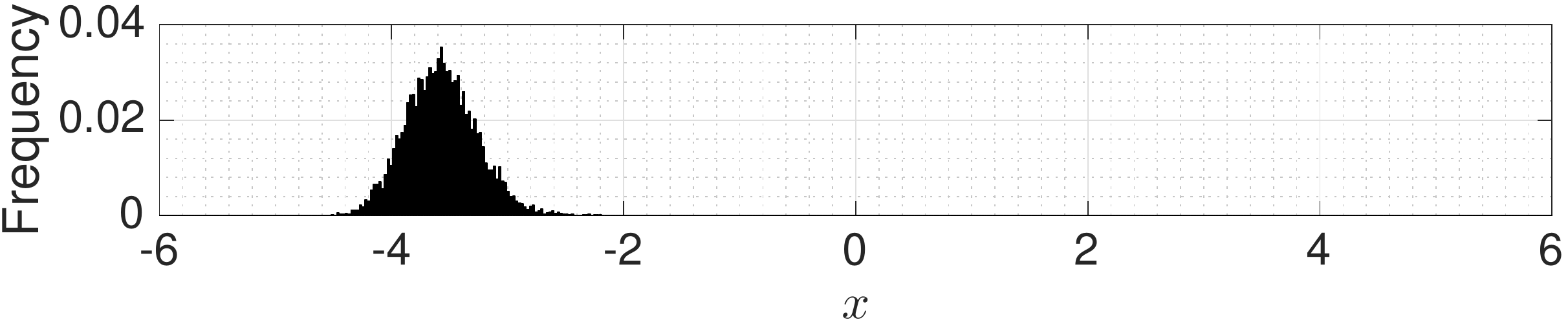} \\
\includegraphics[width=1\columnwidth]{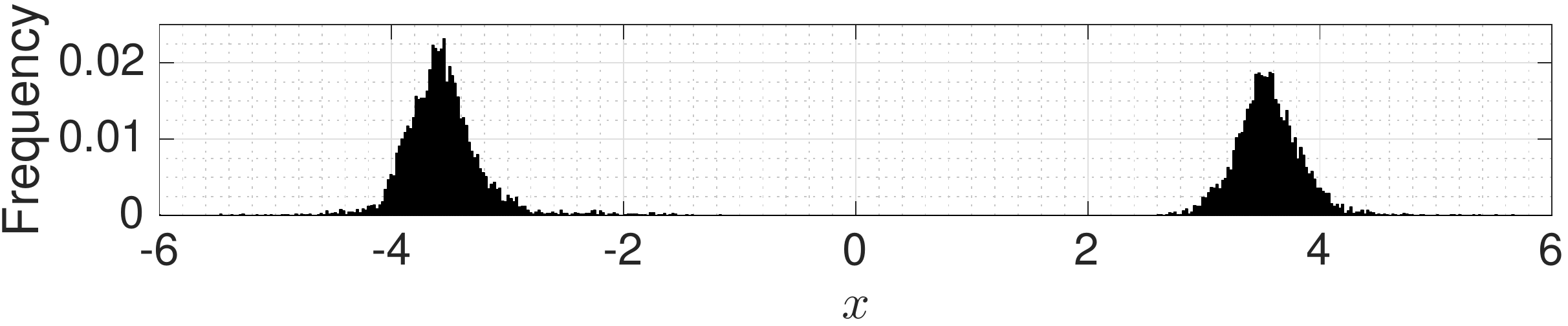}  
\vspace{-22pt}
\caption{Top: the double-well potential. Middle: the empirical distribution obtained via ULA (corresponds to FLA with $\alpha = 2$). Bottom: the empirical distribution obtained via FLA ($\alpha = 1.7$). }
\vspace{-3pt}
\label{fig:doublewell}
\end{figure}

\section{Experiments}

\textbf{The double-well potential: }
We conduct our first set of experiments on a synthetic setting where we consider the double-well potential, defined as follows: 
\begin{align*}
U(x) = (x+5)(x+1)(x-1.02)(x-5)/10+0.5.
\end{align*}
We illustrate the double-well potential in Figure~\ref{fig:doublewell} (top). It can be observed that the potential contains two well-separated modes with different heights, which makes the problem challenging. 

In our first experiment, we consider our first discretization scheme presented in \eqref{eqn:euler_approx}, where we approximate the true drift $b$, by $\tilde{b}$. Here, we use decreasing step-sizes that are determined as $\eta_n = (a_\eta/n)^{b_\eta}$, where we fix $a_\eta = 10^{-7}$ and $b_\eta = 0.6$. In each experiment, we generate $N=5000$ samples by using \eqref{eqn:euler_approx} and estimate the mean of the target distribution $\pi$ by using the sample average.   
For each $\alpha$, we run this scheme for different values of $h$ and $K$, repeat each experiment $5$ times, and monitor the  bias, where the ground truth is obtained via a numerical integrator.

We first fix $h=0.06$ and monitor the bias for increasing values of $K$. Here, we define the notion of an optimal $K$ as the smallest $K$, for whose larger values the performance improvement becomes negligible.
As we can observe from Figure~\ref{fig:doublewell_biasK} (top), the bias is gracefully degrading for increasing $K$, where the optimal $K$ depends on the choice of $\alpha$. We also observe that, modest values of $K$ seem sufficient for obtaining accurate results, especially for small $\alpha$.

In our second experiment, we fix $K=15$ and monitor the bias for different values of $h$. The results are illustrated in Figure~\ref{fig:doublewell_biasK} (bottom). We observe that the results support our theory: for very small values of $h$, the term $1/(hK)$ in the bound of Theorem~\ref{thm:euler_approx} dominates since $K$ is fixed. Therefore, we observe a drop in the bias as we increase $h$ up to a certain point, and then the bias gradually increases along with $h$. The results show that the performance becomes more sensitive to the value of $h$, as $\alpha$ becomes smaller.

\begin{figure}[t]
\centering
\includegraphics[width=\columnwidth]{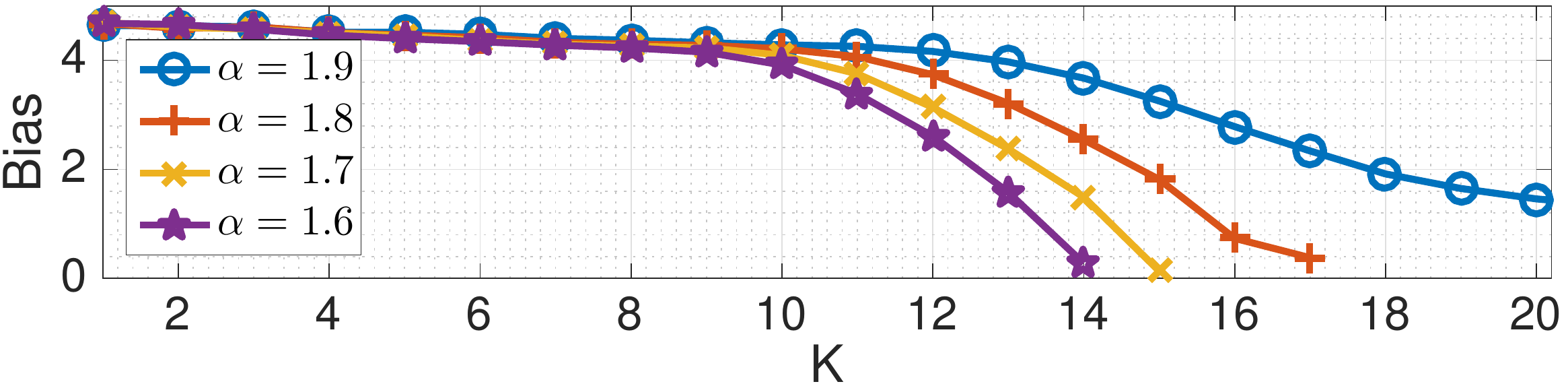} \\
\includegraphics[width=\columnwidth]{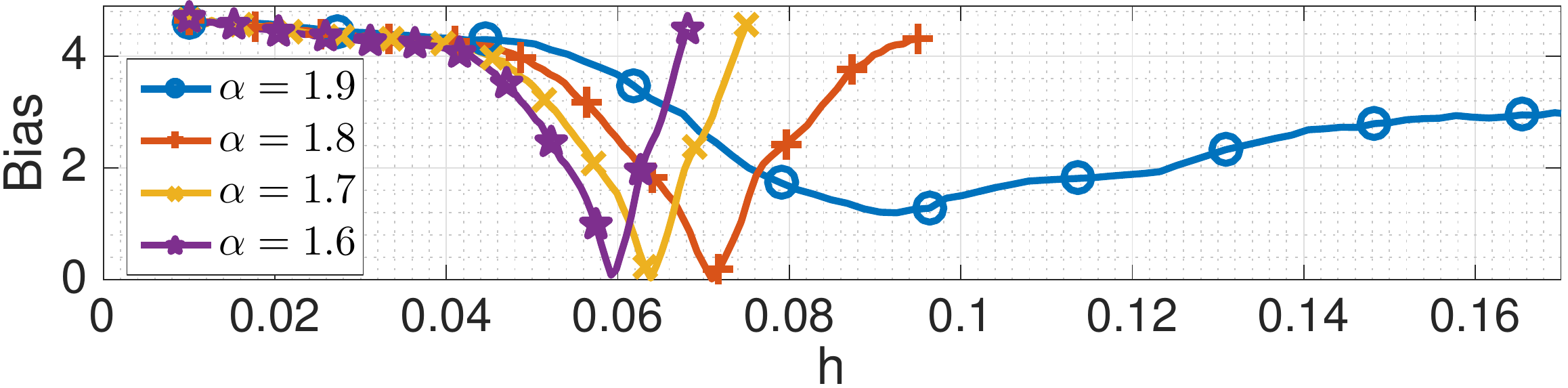} 
\vspace{-22pt}
\caption{Evaluation of the scheme in \eqref{eqn:euler_approx} on the estimation of the mean of the double-well potential. Top: the average bias vs $K$ for a fixed $h$. Bottom: the average bias vs $h$ for a fixed $K$.}
\vspace{-5pt}
\label{fig:doublewell_biasK}
\end{figure}

Even though the results in Figure~\ref{fig:doublewell_biasK} are promising, in practical applications we would not be able to use the scheme in \eqref{eqn:approxriesz} due to computational issues. 
In our next experiment, we aim to assess the approximation error of our second approximation scheme $\hat{b}$ given in \eqref{eqn:drift_final}.
However, the error $|\ba{x}-\bha{x}|$ cannot be measured in a straightforward manner, since $b$ cannot be computed exactly and the error itself depends on the particular point $x$.

Here, we develop an intuitive accuracy criterion for getting better insight into this error, where the aim is to compute the value of $K$ for which $\bta{x}$ and $\bha{x}$ would yield similar approximation errors on average. For a given $x$ and fixed $h$, we first choose a large enough $K^\star \in \mathds{N}_+$ and compute $b^\star(x,\alpha) \triangleq \tilde{b}_{h,K^\star}(x,\alpha)$ as our reference for $\ba{x}$. Then, for $K \in \llbracket 1, K^\star\rrbracket$, we compute the approximation error $e(x,\alpha,K) \triangleq | \bta{x} - b^\star(x,\alpha) |$ and the error induced by the ultimate approximation scheme: $\hat{e}(x,\alpha) \triangleq | \bha{x} - b^\star(x,\alpha) |$. We then find the value of $K$ for which $e(x,\alpha,K)$ and $\hat{e}(x,\alpha)$ are the closest: $\kappa(x,\alpha) = \argmin_K |e(x,\alpha,K) - \hat{e}(x,\alpha)|$.
We finally evaluate $\kappa$ on $I$ different points $\{x_i\}_{i=1}^I$ and use the average of these values as the measure of accuracy of $\hat{b}$, defined as:
$\hat{\kappa}(\alpha) = (1/I) \sum\nolimits_{i=1}^I \kappa(x_i,\alpha)$. Intuitively, this value is expected to be large when $\hat{b}$ yields a low error. 

In order to assess the accuracy of $\hat{b}$ in the double-well problem, we compute $\hat{\kappa}(\alpha)$ for different values of $\alpha$, where we fix $h=0.06$, $K^\star = 170$, and choose $\{x_i\}_i$ as $I = 200$ evenly-spaced points from the interval $[-5,5]$. The results are given in Table~\ref{tbl:kappa}. The results show that, despite its simplicity, $\hat{b}$ is able to provide reasonably accurate estimates for $b$. 
We observe that for $\alpha =1.6$, $\hat{\kappa}(\alpha)$ becomes $14.12$, which is even larger than the optimal $K$ for $\alpha =1.6$, as shown in Figure~\ref{fig:doublewell_biasK}.  
Therefore, it is promising to use $\hat{b}$ in real applications since it yields sufficiently accurate approximations with less computational requirements and does not require additional tuning for $h$ and $K$.

 \begin{table}[t]
 \label{tbl:kappa}
 \vspace*{-\baselineskip}
\caption{Evaluation of the accuracy of $\hat{b}$.}
\scalebox{0.9}
{
\begin{tabular}{c | c | c | c | c | c }
 \toprule
 & $\alpha = 1.5$ & $\alpha = 1.6$ & $\alpha = 1.7$ & $\alpha = 1.8$ & $\alpha = 1.9$\\
\midrule
$\hat{\kappa}(\alpha)$  & $19.31$ & $14.12$ & $12.72$ & $8.64$ & $7.03$ \\
\bottomrule
\end{tabular}
}
\vspace*{-0.1\baselineskip}
\end{table}

\begin{figure}[t]
\centering
\includegraphics[width=\columnwidth]{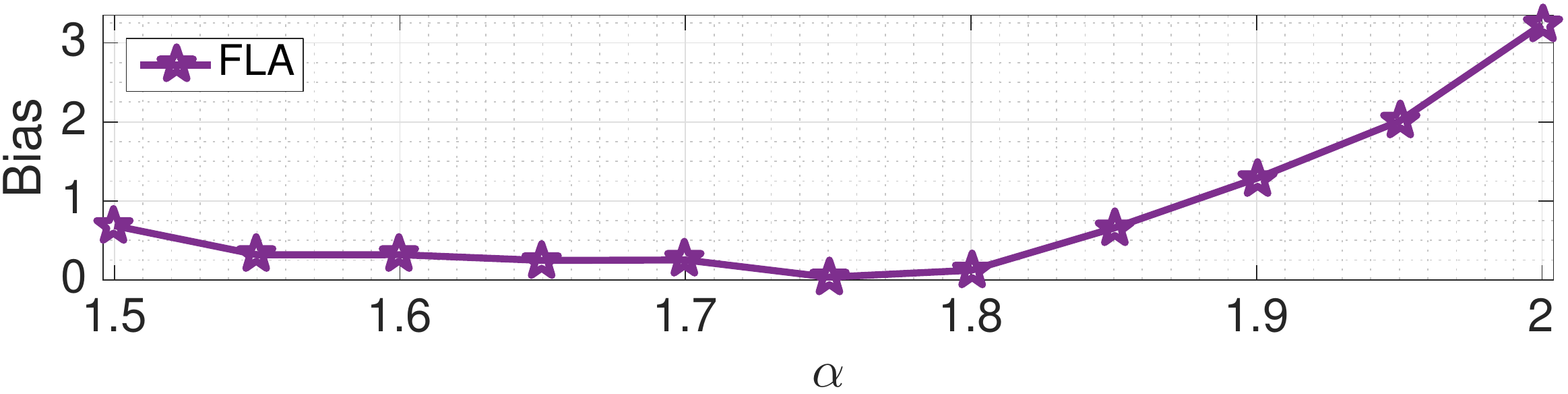} 
\vspace{-22pt}
\caption{The average bias obtained via FLA for different values of $\alpha$. Note that FLA corresponds to ULA when $\alpha = 2$. }
\vspace{-5pt}
\label{fig:doublewell_bias_bhat}
\end{figure}

In our last experiment on the double-well potential, we evaluate the ultimately proposed approach FLA on estimation of the mean of the target distribution. Similarly to the previous experiments, we run FLA for different values of $\alpha$, where we try several values for the hyper-parameters $a_\eta$ and $b_\eta$ and report the best results for each $\alpha$.
In each experiment, we generate $N=50000$ samples by using \eqref{eqn:euler_approx_hat} and repeat the procedure $10$ times.
We first illustrate two typical empirical distributions obtained via FLA and ULA in Figure~\ref{fig:doublewell} (middle, bottom). It can be clearly observed that ULA can locate only one of the modes, whereas FLA is able to locate both of the modes, thanks to the jumps of the $\alpha$-stable processes. This circumstance also reflects in the average bias, as illustrated in Figure~\ref{fig:doublewell_bias_bhat}. The results show that for $\alpha =2$ the average bias is around $3$, implying that the algorithm concentrates on either one of the modes at each trial, whereas we observe that the bias rapidly decreases as we decrease $\alpha$. The best performance is achieved when $\alpha = 1.75$. Finally we note that these results are also in line with the best-performing results given in Figure~\ref{fig:doublewell_biasK}.

\begin{figure}[t]
\centering
\includegraphics[width=0.94\columnwidth]{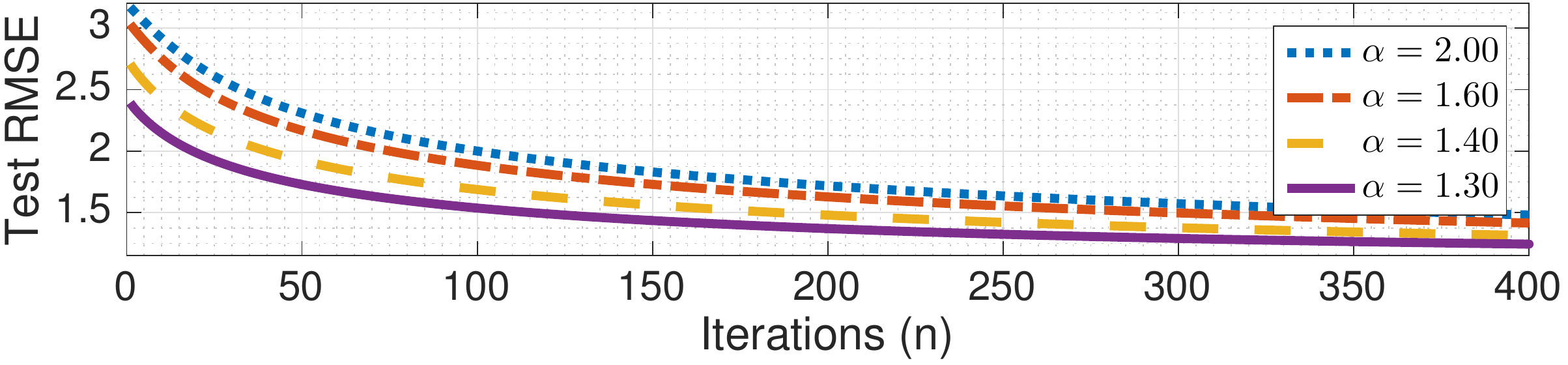} \\
\includegraphics[width=0.94\columnwidth]{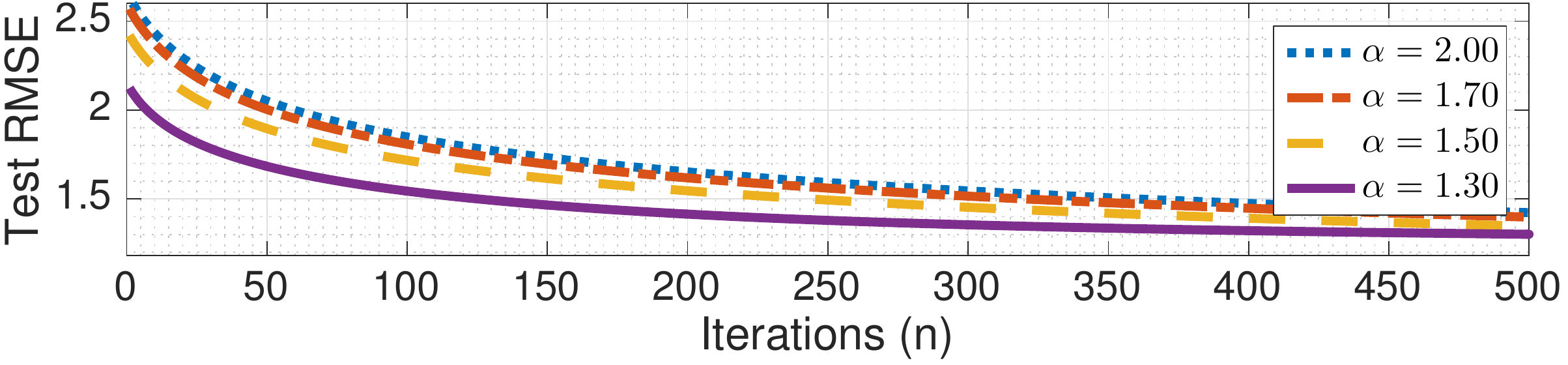}  \\
\includegraphics[width=0.94\columnwidth]{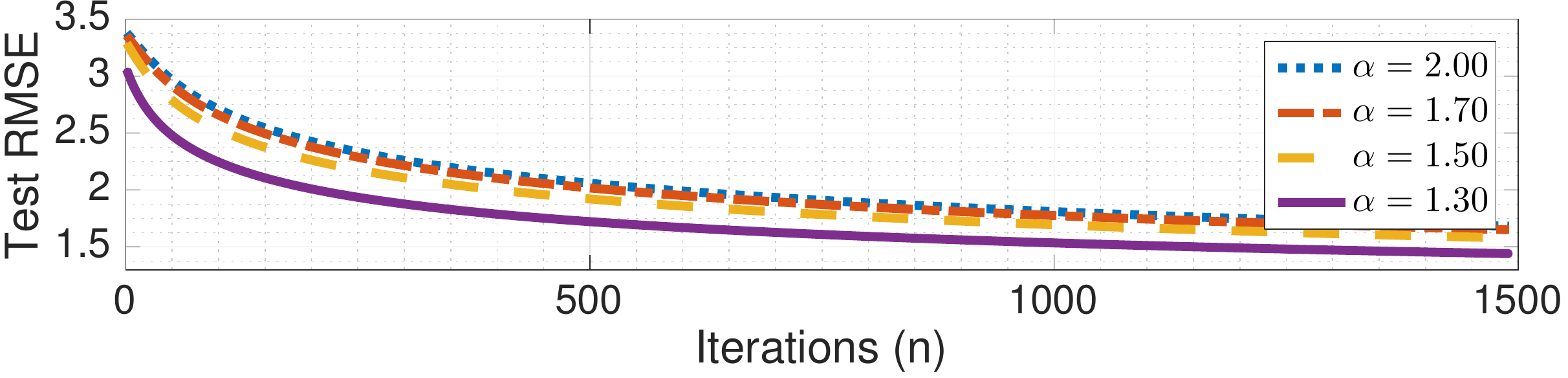}  
\vspace{-12pt}
\caption{The performance of SG-FLA on a link prediction application. Top: ML-$1$M, middle: ML-$10$M, bottom: ML-$20$M.}
\vspace{-10pt}
\label{fig:mf}
\end{figure}

\textbf{Matrix factorization: } 
In our second set of experiments, we switch to a large-scale Bayesian machine learning context.
We explore the use of SG-FLA on a large-scale link prediction application where we consider the following probabilistic  matrix factorization model \cite{gemulla2011,salakhutdinov2008bayesian}:
$
\nonumber A_{il} \sim {\cal N}(0,1), \> B_{lj} \sim {\cal N}(0,1), \> 
Y_{ij} | A,B \sim {\cal N}\bigl(\sum\nolimits_l A_{il} B_{lj}, 1 \bigr)
$,
where $Y \in \mathds{R}^{I \times L}$ is the observed data matrix with possible missing entries, and $A \in \mathds{R}^{I \times L}$ and $B \in \mathds{R}^{L \times J}$ are the latent factor matrices. The aim in this application is to predict the missing values of $Y$ by using a low-rank approximation. Recently, SGLD has been proven successful on similar models \cite{Ahn15,csimcsekli2015parallel,durmus2016stochastic,simsekli2017parallelized}. 

In this set of experiments, we apply SG-FLA on the three MovieLens movie ratings datasets (\url{grouplens.org}): MovieLens $1$Million (ML-$1$M), $10$Million (ML-$10$M), and $20$Million (ML-$20$M). 
The ML-$1$M dataset contains $1$ million non-zero entries, where $I = 3883$ (movies) and $J = 6040$ (users). The ML-$10$M dataset contains $10$ million non-zero entries, where $I = 10681$ and $J = 71567$. Finally, the ML-$20$M dataset contains $20$ million ratings, where $I = 27278$ and $J = 138493$.
In our experiments, we randomly select $10\%$ of the data as the test set and use the remaining data for generating the samples. The rank $L$ is set to $10$ for all datasets. In all experiments, we use decreasing step-sizes, where we fix $b_\eta = 0.51$ and try several values for $a_\eta$ and report the best results. We set $N_\Omega = N_\Y/10$ where $N_\Y$ denotes the number of non-zero entries in a given dataset.

Figure~\ref{fig:mf} shows the root mean squared-errors (RMSE) that are obtained on the three test sets. 
In all these experiments, we observe that the rate of convergence of SG-FLA increases as we decrease $\alpha$ from $2.0$ (i.e. SGLD) to $1.3$. In the case when $\alpha<1.3$, the jumps induced by the stable-L\'{e}vy motion becomes very large and the performance starts degrading. These results show that SG-FLA can be considered as a viable alternative to SGLD in large scale settings and it can provide improved performance over SGLD via minor algorithmic modifications, which come with the expense of tuning an additional parameter $\alpha$.

\textbf{Sigmoid Belief Networks: } 
In our last set of experiments, we investigate the use of SG-FLA on Sigmoid Belief Networks (SBN) \cite{gan2015learning}, which have been investigated in recent Stochastic Gradient MCMC studies \cite{chen2015convergence}. We make use of the software provided in \cite{chen2015convergence} and employ the identical experimental setup described therein: the binary observed data are assumed to be generated from a single binary hidden layer with sigmoid activations. The overall model is applied on the MNIST dataset, which contains $70$K binary images (of size $28 \times 28$) corresponding to different digits. 

In our experiments, we use an SBN with $100$ hidden units. We use a training set of $60$K images and $10$K images for testing, set the size of the data subsample $N_\Omega = 200$, and run SG-FLA for $5000$ iterations for training. Finally, we estimate the test likelihoods by using an annealed importance sampler \cite{salakhutdinov2008quantitative}.

As opposed to our previous experiments, we use constant step-sizes in these experiments, i.e.\ $\eta_n = \eta$ for all $n$, and investigate the performance of SG-FLA on SBNs for different values of $\eta$ and $\alpha$. The results are illustrated in Figure~\ref{fig:sbn}. We can observe that for small values of $\eta$, SG-FLA yields similar test likelihoods for all values of $\alpha$. However, as we increase the step size, we observe that the test likelihood of SGLD ($\alpha =2$) starts to diverge, whereas SG-FLA becomes more robust to large step sizes as $\alpha$ gets smaller. When $\alpha = 1.6$ the test likelihood stays almost constant for increasing values of $\eta$.
We do not observe an improvement in the performance for $\alpha<1.6$.

\begin{figure}[t]
\centering
\includegraphics[width=\columnwidth]{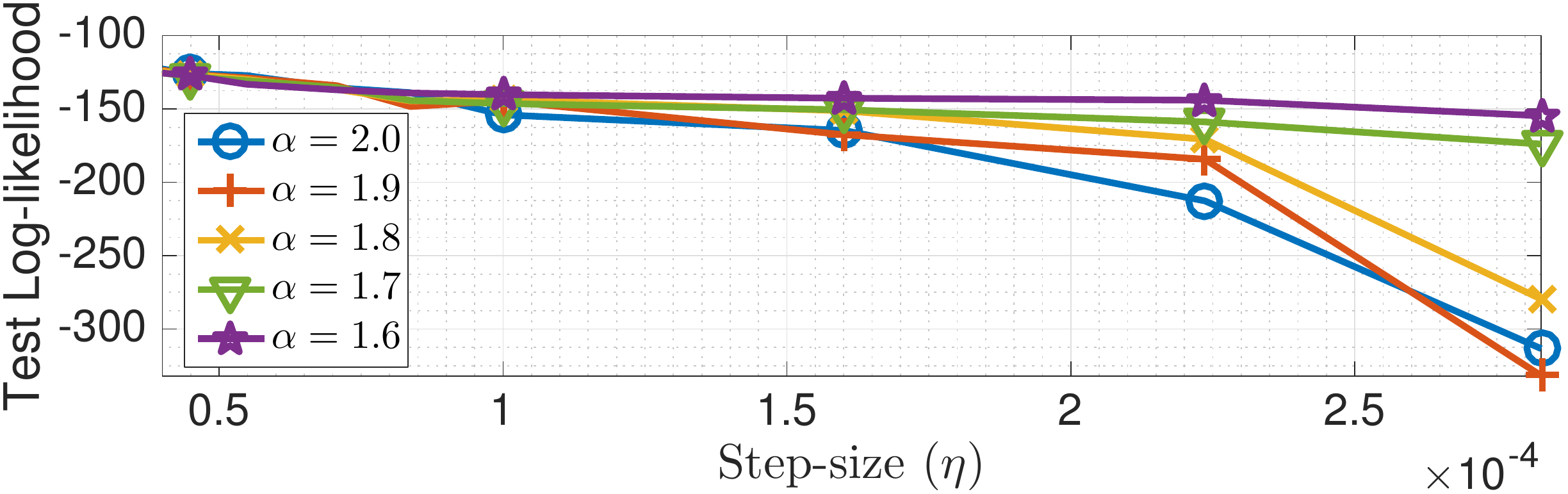} 
\vspace{-22pt}
\caption{The test log-likelihoods obtained via SG-FLA on SBNs, as a function of step-size $\eta$. }
\vspace{-5pt}
\label{fig:sbn}
\end{figure}

\section{Conclusion and Future Directions}
\label{sec:conc}

In this study, we explored the use of L\'{e}vy-driven SDEs within MCMC and presented a novel FLMC framework. We first showed that FLMC targets the correct distribution and then developed novel and scalable computational methods for practical applications. We provided formal analysis of the convergence properties and the approximation quality of the proposed numerical schemes. We supported our theory with several experiments, which showed that FLMC brings various benefits, such as providing superior performance in multi-modal settings, higher convergence rates, and robustness to algorithm parameters. 

The proposed framework opens up several interesting future directions: (i) the use of FLMC in simulated annealing for global optimization \cite{chen2016bridging}, where the jumps might bring further advantages (ii) extension of FLMC to `stable-like' processes \cite{bass1988uniqueness}, where $\alpha$ can depend on $\x$ (iii) incorporation of the local geometry for faster convergence \cite{PatTeh2013a,li2015preconditioned,simsekliICML2016} (iv) the use of SG-FLA in Bayesian model selection \cite{simsekli2016stochastic}.

\pagebreak

\section*{Acknowledgments}

The author would like to thank to Alain Durmus for his helps on the proofs, and to Roland Badeau, A. Taylan Cemgil, and Ga\"{e}l Richard for fruitful discussions.
The author would also like to thank to Changyou Chen for sharing the code used in the experiments conducted on SBNs.
This work is partly supported by the French National Research Agency (ANR) as a part of the FBIMATRIX project (ANR-16-CE23-0014), and the EDISON 3D project (ANR-13-CORD-0008-02).

\bibliography{levylangevin}
\bibliographystyle{icml2016}

\vfill
\pagebreak

\newpage

\onecolumn

\icmltitle{Fractional Langevin Monte Carlo: Exploring L\'{e}vy Driven Stochastic Differential Equations for MCMC\\{\normalsize SUPPLEMENTARY DOCUMENT}}

\icmltitlerunning{Fractional Langevin Monte Carlo -- Supplementary Document}


{
	\centering
	\textbf{Umut \c Sim\c sekli} \\
	LTCI, T\'{e}l\'{e}com ParisTech, Universit\'{e} Paris-Saclay, 75013, Paris, France \\ 
	\url{umut.simsekli@telecom-paristech.fr}

}



\setcounter{section}{0}
\setcounter{equation}{0}
\setcounter{figure}{0}
\setcounter{table}{0}
\setcounter{page}{1}
 \renewcommand{\theequation}{S\arabic{equation}}
 \renewcommand{\thefigure}{S\arabic{figure}}

 \section{Numerically Stable Computation}

In this section, we focus on the computation of the following quantity:
\begin{align}
\frac{\D^{\gamma} \{ - \phi(x) \partial_x U(x) \} }{\phi(x)} \approx \frac1{h^\gamma} \sum_{k=-K}^K g_{\gamma,k}  \frac{- \phi(x-kh) \partial_x U(x-kh)}{\phi(x)}. \label{eqn:riesz_approx_num} 
\end{align}
Since $\phi(x) = \exp(-U(x))$, for very large values of $U(x)$ and $U(x-kh)$ we might easily end up with $0/0$ errors if we directly implement \eqref{eqn:approxriesz}.

\noindent We now present a numerically more stable algorithm for computing \eqref{eqn:approxriesz} . We rewrite the above equation as follows:
\begin{align}
\frac1{h^\gamma} \sum_{k=-K}^K g_{\gamma,k}  \frac{- \phi(x-kh) \partial_x U(x-kh)}{\phi(x)} &= \frac1{h^\gamma} \sum_{k=-K}^K g_{\gamma,k} \Bigl[-\partial_x U(x-kh) \exp\Bigl(\underbrace{U(x)-U(x-kh)}_{\ell_k}\Bigr) \Bigr] \\
&= \frac1{h^\gamma} \sum_{k=-K}^K g_{\gamma,k} \Bigl[ -\partial_x U(x-kh) \exp(\ell_k - \ell^* + \ell^*) \Bigr]\\
&= \frac{\exp \ell^* }{h^\gamma} \sum_{k=-K}^K g_{\gamma,k} \Bigl[ -\partial_x U(x-kh) \exp(\ell_k - \ell^*) \Bigr]
\end{align}
where $\ell^* = \max_{k \in \llbracket -K, K\rrbracket} \ell_k$. This numerical approach is similar to the well-known `log-sum-exp' trick.

\section{Proof of Theorem~\ref{thm:ffpe}}

Before proving Theorem~\ref{thm:ffpe}, we present the following proposition that will be helpful for our analysis.

\begin{prop}
\label{prop:riesz_der}
Let $f: \mathds{R} \rightarrow \mathds{R}$ be a differentiable function and assume that $\D^\gamma f(x)$ is well-defined for some $\gamma \in \mathds{R}$. Then, the following equality holds:
\begin{align}
\partial_x \D^\gamma f(x) = \D^\gamma \partial_x f(x).  
\end{align}  
\end{prop}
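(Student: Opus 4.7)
\textbf{Proof proposal for Proposition~\ref{prop:riesz_der}.}

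My plan is to prove this identity in the frequency domain by exploiting the definition of the Riesz derivative in \eqref{eqn:riesz_def} together with the elementary fact that differentiation corresponds to multiplication by $i\omega$ under the Fourier transform. Since both $\D^\gamma$ and $\partial_x$ become multiplication operators on the Fourier side, and scalar multiplications commute, the result should drop out essentially immediately.

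Concretely, the first step is to write
\begin{equation*}
\D^\gamma f(x) = \mathcal{F}^{-1}\bigl\{ |\omega|^\gamma \hat{f}(\omega)\bigr\}
\end{equation*}
using \eqref{eqn:riesz_def}, and then differentiate in $x$ under the inverse Fourier integral. This yields
\begin{equation*}
\partial_x \D^\gamma f(x) = \mathcal{F}^{-1}\bigl\{ i\omega \,|\omega|^\gamma \hat{f}(\omega)\bigr\}.
\end{equation*}
Next, I would use the standard rule $\mathcal{F}\{\partial_x f\}(\omega) = i\omega \,\hat{f}(\omega)$ to factor the expression as
\begin{equation*}
i\omega \,|\omega|^\gamma \hat{f}(\omega) = |\omega|^\gamma \bigl(i\omega \hat{f}(\omega)\bigr) = |\omega|^\gamma \,\mathcal{F}\{\partial_x f\}(\omega),
\end{equation*}
and then apply $\mathcal{F}^{-1}$ together with the definition \eqref{eqn:riesz_def} once more to recognize the right-hand side as $\D^\gamma \partial_x f(x)$. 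Chaining these equalities gives the claim.

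The only nontrivial step is justifying the interchange of $\partial_x$ with the inverse Fourier integral, and guaranteeing that the symbol $i\omega\,|\omega|^\gamma \hat{f}(\omega)$ is integrable enough for the manipulations above to make sense. Since the statement of the proposition presupposes that $\D^\gamma f(x)$ is well-defined, i.e.\ $|\omega|^\gamma \hat{f}(\omega)$ lies in an appropriate function class, and since $f$ is differentiable with $i\omega\hat{f}(\omega)$ being the Fourier transform of $\partial_x f$, the needed integrability of $|\omega|^\gamma (i\omega \hat{f}(\omega))$ transfers from one side of the identity to the other. Under the mild decay/regularity assumptions that are anyway needed for Theorem~\ref{thm:ffpe} and its subsequent use (cf.\ \Cref{asmp:f_reg} later in the paper), the dominated convergence theorem justifies differentiation under the integral, which completes the argument.
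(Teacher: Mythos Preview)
Your proposal is correct and follows essentially the same route as the paper: both arguments pass to the Fourier side, use that $\partial_x$ and $\D^\gamma$ act as multiplication by $i\omega$ and $|\omega|^\gamma$ respectively, and then commute these scalar multipliers. The paper's version is slightly more terse (it simply chains the identities without discussing the interchange of $\partial_x$ with $\mathcal{F}^{-1}$), but the substance is identical.
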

\begin{proof}
By definition we have:
\begin{align}
\D^\gamma f(x) &= {\cal F}^{-1} \{|\omega|^\gamma \hat{f}(\omega) \}, \\
\partial_x f(x) &= {\cal F}^{-1} \{ i \omega \hat{f}(\omega) \},
\end{align}
where ${\cal F}$ denotes the Fourier transform, $\hat{f}(\omega) = {\cal F}\{f(x)\} $, and $i = \sqrt{-1}$. By using these definitions, we obtain:
\begin{align}
\partial_x \D^\gamma f(x) &= {\cal F}^{-1} \{ {\cal F} \{ \partial_x \D^\gamma f(x)\}  \} \\
&= {\cal F}^{-1} \{  i \omega {\cal F} \{\D^\gamma f(x)\}  \} \\
&= {\cal F}^{-1} \{  i \omega |\omega|^\gamma \hat{f}(\omega)  \} \\
&= {\cal F}^{-1} \{  |\omega|^\gamma {\cal F} \{ {\cal F}^{-1} \{  i \omega  \hat{f}(\omega) \}  \} \} \\
&= {\cal F}^{-1} \{  |\omega|^\gamma {\cal F} \{ \partial_x f(x) \} \} \\
&= {\cal F}^{-1} \{  {\cal F} \{ \D^\gamma \partial_x f(x) \} \} \\
&=  \D^\gamma \partial_x f(x) .
\end{align}
This completes the proof.
\end{proof}

\subsection{Proof of Theorem~\ref{thm:ffpe}}

\begin{proof}
Let us define $q(\x,t)$ as the probability density function of the state $\x$ at time $t$. By Proposition 1 in \citepNew{schertzer2001fractional}, we obtain the fractional Fokker-Planck equation associated with the SDE given in \eqref{eqn:levysde} as follows:
\begin{align}
\partial_t q(\x,t) = -\partial_\x [ \ba{\x} q(\x,t)] -  \D^\alpha q(\x,t). 
\end{align}
By using the definition of $\ba{\x}$ we obtain
\begin{align}
\partial_t q(\x,t) &= -\partial_\x [ \frac{\D^{\alpha-2} \{-\phi(\x) \partial_\x U(\x) \} }{\phi(\x)} q(\x,t)] - \D^\alpha q(\x,t) \\
&= -\partial_\x [ \frac{\D^{\alpha-2} \{-\pi(\x) \partial_\x U(\x) \} }{\pi(\x)} q(\x,t)] - \D^\alpha q(\x,t).
\end{align}
Here, we used the fact that $\pi(\x) = \phi(\x)/Z$, where $Z = \int \phi(\x) d\x$. By using $-\partial_\x U(\x)  = \partial_\x \log \pi(\x) = \frac{\partial_\x \pi(\x)}{\pi(\x)}$, we obtain:
\begin{align}
\partial_t q(\x,t) &= -\partial_\x [ \frac{\D^{\alpha-2} \{\partial_\x \pi(\x) \} }{\pi(\x)} q(\x,t)] - \D^{\alpha} q(\x,t)
\end{align}
We can verify that $\pi(\x)$ is an invariant measure of the Markov process $(\x_t)_{t\geq 0}$ by checking
\begin{align}
-\partial_\x [ \frac{\D^{\alpha-2} \{\partial_\x \pi(\x) \} }{\pi(\x)} \pi(\x)] - \D^{\alpha} \pi(\x) &= -\partial_\x [ \D^{\alpha-2} \{\partial_\x  \pi(\x) \} ] - \D^{\alpha} \pi(\x) \\
&= -\partial^2_\x [ \D^{\alpha-2} \{ \pi(\x)\} ] - \D^{\alpha} \pi(\x) \label{eqn:interstep} \\
&= \D^2 [ \D^{\alpha-2} \{ \pi(\x) \}] - \D^{\alpha} \pi(\x)  \\
&= \D^{\alpha} \{\pi(\x)\} - \D^{\alpha} \{\pi(\x) \} \label{eqn:interstep2}\\
&= 0.
\end{align}
Here, we used the semigroup property of the Riesz potentials $\D^a \D^b f(x) = \D^{a+b} f(x)$ in \eqref{eqn:interstep2} and Proposition~\ref{prop:riesz_der} in \eqref{eqn:interstep}. If $\ba{\x}$ is Lipschitz continuous, by \citeNew{schertzer2001fractional} we can conclude that $\pi(\x)$ is the unique invariant measure of the Markov process $(\x_t)_{t\geq 0}$. 

\end{proof}

\section{Proof of Corollary~\ref{cor:euler}}

\begin{proof}
By Theorem~\ref{thm:ffpe}, we know that $\pi(\x)$ is the unique invariant distribution of the Markov process $(\x_t)_t$. Then, the claim directly follows Theorem 2 of \citepNew{panloup2008recursive}, provided that there exists $p\in (0,1/2]$ and $q\in[1/2,1]$, such that the following conditions hold:
\begin{align}
\int_{|x|>1} v(x) |x|^{2p} < \infty, \quad  \text{and} \quad \int_{|x|\leq1} v(x) |x|^{2q} < \infty,
\end{align}
where $v(x)$ is the L\'{e}vy-measure of the symmetric $\alpha$-stable L\'{e}vy process, defined as
\begin{align}
v(x) = \frac1{|x|^{\alpha+1}}. \label{eqn:levymeasure}
\end{align}
It is easy to see that these conditions hold with $p\in (0,1/2]$ and $q \in (\alpha/2,1]$. Therefore, we can directly apply Theorem 2 of \citepNew{panloup2008recursive} in order to obtain the desired result. 
\end{proof}

\section{Proof of Theorem~\ref{thm:riesz_full}}

Before proving Theorem~\ref{thm:riesz_full}, we first bound $|\D^\gamma f_\pi(x) - \Delta_h^{\gamma} f_\pi(x)|$ and $|\Delta_h^{\gamma} f_\pi(x) - \Delta_{h,K}^{\gamma} f_\pi(x)|$, which will be useful in our analysis.

\citetNew{ccelik2012crank} showed that $|\D^\gamma f(x) - \Delta_h^\gamma f(x)| = {\cal O}(h^2)$ for $1<\gamma\leq2$. However, we cannot directly use their result. For completeness, we adapt the proof of Lemma 2.2 in \citepNew{ccelik2012crank}, and prove that we obtain a bound of the same order for $-1<\gamma<0$. 

\begin{lemma}
\label{lem:riesz}
Assume $f(x) \in {\cal C}^3 (\mathds{R})$ and all derivatives up to order three belong to ${\cal L}_1(\mathds{R})$. Let $\Delta_h^\gamma$ be the operator defined in \eqref{eqn:delta_h}. Then, for $-1<\gamma<0$, the following bound holds:
\begin{align}
|\D^\gamma f(x) - \Delta_h^\gamma f(x)| = {\cal O}(h^2)
\end{align}
as $h$ goes to zero. 
\end{lemma}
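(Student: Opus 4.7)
My plan is to work entirely on the Fourier side. By the definition \eqref{eqn:riesz_def}, the Fourier symbol of $\D^\gamma$ is $|\omega|^\gamma$. For the centered difference operator \eqref{eqn:delta_h}, I would first establish the generating-function identity
\begin{equation*}
\sum_{k=-\infty}^\infty g_{\gamma,k}\, e^{-ik\theta} = |2\sin(\theta/2)|^\gamma,
\end{equation*}
valid in the distributional sense for $\gamma \in (-1,0)$; this is essentially the Fourier series of $|2\sin(\theta/2)|^\gamma$ and is standard in the literature on fractional centered differences. It yields $\mathcal{F}\{\Delta_h^\gamma f\}(\omega) = |(2/h)\sin(\omega h/2)|^\gamma \hat{f}(\omega)$, so that
\begin{equation*}
\D^\gamma f(x) - \Delta_h^\gamma f(x) = \frac{1}{2\pi}\int_{\mathds{R}} e^{i\omega x}\, \Psi_h(\omega)\, \hat{f}(\omega)\, d\omega,
\end{equation*}
with $\Psi_h(\omega) \triangleq |\omega|^\gamma - |(2/h)\sin(\omega h/2)|^\gamma$.

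The next step is a pointwise bound on $\Psi_h$. Taylor-expanding $\sin(\omega h/2) = (\omega h/2)(1 - (\omega h)^2/24 + O((\omega h)^4))$ and then expanding $|\cdot|^\gamma$ shows that on the low-frequency region $|\omega|\leq \pi/h$,
\begin{equation*}
|\Psi_h(\omega)| \leq C\, h^2\, |\omega|^{\gamma+2}.
\end{equation*}
Plugging this into the integral and splitting $\int_\mathds{R} = \int_{|\omega|\leq \pi/h} + \int_{|\omega|>\pi/h}$, the low-frequency piece is bounded by $C h^2 \int_{\mathds{R}} |\omega|^{\gamma+2} |\hat{f}(\omega)|\, d\omega$. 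This integral is finite: for $\omega \approx 0$ we use $\gamma+2 > 1$, and for large $|\omega|$ the hypothesis $f^{(3)} \in \mathcal{L}_1(\mathds{R})$ together with integration by parts three times gives $|\hat{f}(\omega)| \leq C(1+|\omega|)^{-3}$, whence $|\omega|^{\gamma+2}|\hat{f}(\omega)| = O(|\omega|^{\gamma-1})$ is integrable at infinity since $\gamma < 0$. This produces the desired $O(h^2)$ contribution for the main term.

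The remaining work is the tail $|\omega|>\pi/h$, which I expect to be the main obstacle. The easy half, $|\omega|^\gamma |\hat{f}(\omega)|$, contributes at most $C h^{2-\gamma} = o(h^2)$ using the $|\omega|^{-3}$ decay of $\hat{f}$. The delicate half involves $|(2/h)\sin(\omega h/2)|^\gamma$, which has integrable but genuine singularities at $\omega h \in 2\pi\mathbb{Z}\setminus\{0\}$. I would handle this by slicing the tail into intervals of length $2\pi/h$ centered at the singularities, changing variables $\theta = \omega h$ to factor out the $h$-dependence cleanly, and then exploiting the $\mathcal{L}_1$ integrability of $|\sin(\theta/2)|^\gamma$ on each period (valid since $\gamma > -1$) together with the uniform decay $|\hat{f}(\omega)| \leq C|\omega|^{-3}$ to dominate the resulting sum by a convergent series that is $o(h^2)$. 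Combining the low- and high-frequency estimates yields the claimed bound $|\D^\gamma f(x) - \Delta_h^\gamma f(x)| = \mathcal{O}(h^2)$, uniformly in $x$.
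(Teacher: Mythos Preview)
Your approach is essentially the same as the paper's: both work on the Fourier side, invoke the generating-function identity for $|2\sin(\theta/2)|^\gamma$, Taylor-expand the sinc to get the $h^2|\omega|^{\gamma+2}$ factor, and exploit the decay $|\hat f(\omega)|\le C(1+|\omega|)^{-3}$ coming from $f\in\mathcal C^3$ with $L^1$ derivatives. The one substantive difference is that the paper applies the Taylor bound $|1-v(z)|\le C_0 z^2$ (with $v(z)=|2\sin(z/2)|^\gamma/|z|^\gamma$) as if it held for all $z=\omega h$ and then integrates in $\omega$ globally; since $v$ actually blows up at $z\in 2\pi\mathbb Z\setminus\{0\}$ when $\gamma<0$, this step is only literally valid on $|\omega|\le \pi/h$. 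Your explicit low/high-frequency split, together with the period-by-period estimate using the $L^1$ integrability of $|\sin(\theta/2)|^\gamma$ for $\gamma>-1$ and the $|\omega|^{-3}$ decay of $\hat f$, closes this gap and in fact shows the tail contributes $O(h^{2-\gamma})=o(h^2)$. So your argument is a strict refinement of the paper's, arriving at the same $\mathcal O(h^2)$ conclusion by the same mechanism but with the high-frequency region handled rigorously.
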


\begin{proof}
We follow the same proof technique given in \citeNew{ccelik2012crank}. We first make use of the generator of \eqref{eqn:delta_h} given as follows: \citepNew{ortigueira2006riesz} 
\begin{align}
|2 \sin(z/2)|^\gamma = \sum_{k=-\infty}^\infty g_{\gamma,k} \exp(ikz).
\end{align}
Now, consider the Fourier transform of $\Delta_h^\gamma f(x)$
\begin{align}
{\cal F}\{\Delta_h^\gamma f(x)\} = \sum_{k=-\infty}^\infty g_{\gamma,k} \exp(ikh\omega) \hat{f}(\omega)
\end{align}
where ${\cal F}\{f(x)\} = \int_{-\infty}^\infty f(x) \exp(i\omega x) dx$ and $\hat{f}(\omega) \triangleq {\cal F}\{f(x)\}$. Then, we have 
\begin{align}
{\cal F}\{\Delta_h^\gamma f(x)\} = |2 \sin \frac{\omega h}{2}|^\gamma \hat{f}(\omega)
\end{align}
Let us define $\hat{\varphi}(h,\omega) = |\omega|^\gamma (1- \frac{|2 \sin \frac{\omega h}{2}|^\gamma}{|\omega h|^\gamma})\hat{f}(\omega)$. Then we have
\begin{align}
-\frac1{h^\gamma}{\cal F}\{\Delta_h^\gamma f(x)\} = -|\omega|^\gamma \hat{f}(\omega) + \hat{\varphi}(h,\omega). \label{eqn:delta_fourier}
\end{align}
Let us define $z = \omega h$ and $v(z) = \frac{|2\sin(z/2)|^\gamma}{|z|^\gamma}$. Now, we will bound the function $v(z)$. By using a Taylor expansion, we obtain
\begin{align}
v(z) &= \Bigl|\frac{2}{z}\Bigr|^\gamma \Bigl| \frac{z}{2} - \Bigl(\frac{z}{2}\Bigr)^3 \frac1{3!} +  \Bigl(\frac{z}{2}\Bigr)^5 \frac1{5!}  - \cdots \Bigr|^\gamma \\
&= \Bigl| 1 - \Bigl(\frac{z}{2}\Bigr)^2 \frac1{3!} +  \Bigl(\frac{z}{2}\Bigr)^4 \frac1{5!} - \cdots \Bigr|^\gamma 
\end{align}
Since $\gamma<0$, for small enough $z$, we have
\begin{align}
v(z) &\leq  \Biggl(1 - \Bigl|  \Bigl(\frac{z}{2}\Bigr)^2 \frac1{3!} +  \Bigl(\frac{z}{2}\Bigr)^4 \frac1{5!} + \cdots \Bigr| \Biggr)^\gamma \\
&= 1 - \gamma \Bigl|  \Bigl(\frac{z}{2}\Bigr)^2 \frac1{3!} +  \Bigl(\frac{z}{2}\Bigr)^4 \frac1{5!} + \cdots \Bigr| + \gamma (\gamma -1 ) \Bigl|  \Bigl(\frac{z}{2}\Bigr)^2 \frac1{3!} +  \Bigl(\frac{z}{2}\Bigr)^4 \frac1{5!} + \cdots \Bigr|^2 - \cdots \\
&\leq 1 + C_0 z^2 \\
&= {\cal O}(1+z^2).
\end{align}
By our assumptions, we have 
\begin{align}
|\hat{f}(\omega)| \leq C_1 (1 + |\omega|)^{-3}.
\end{align}
Therefore, we obtain
\begin{align}
|\hat{\varphi}(h,\omega)| &= |\omega|^\gamma |v(\omega h)-1| |\hat{f}(\omega)| \\
&\leq |\omega|^\gamma C_0 |\omega h|^2 C_1(1+|\omega|)^{-3}\\
&\leq C_2 h^2 (1+ |\omega|)^{\gamma+2}(1+|\omega|)^{-3}\\
&= C_2 h^2  (1+ |\omega|)^{\gamma-1}.
\end{align}
Since $-1<\gamma<0$, the inverse Fourier transform of $\hat{\varphi}(h,\omega)$ exists. Then we consider the inverse Fourier transform of \eqref{eqn:delta_fourier} and obtain
\begin{align}
-\frac1{h^\gamma} \Delta_h^\gamma f_{\pi}(x) = \D^\gamma f(x) + \varphi(h,x)
\end{align}
where 
\begin{align}
\varphi(h,x) \triangleq {\cal F}^{-1}\{\hat{\varphi}(h,\omega)\} = \frac1{2 \pi} \int_{-\infty}^\infty \hat{\varphi}(h,\omega) \exp(-i\omega x) d\omega .
\end{align}
By using the bound for $|\hat{\varphi}(h,\omega)|$, we obtain
\begin{align}
|\varphi(h,x)| &= \frac1{2\pi} \int_{-\infty}^\infty \hat{\varphi}(h,\omega) \exp(-i\omega x) d\omega \\
&\leq \frac1{2\pi} \int_{-\infty}^\infty |\hat{\varphi}(h,\omega)|  d\omega \\
&\leq \frac1{2\pi} \int_{-\infty}^\infty  C_2 h^2  (1+ |\omega|)^{\gamma-1}  d\omega \\
&\leq C_3 h^2.
\end{align}
Finally, we conclude that
\begin{align}
|\D^\gamma f(x) -\frac1{h^\gamma} \Delta_h^\gamma f_{\pi}(x)| &=  |\varphi(h,x)| \\
&\leq C_3 h^2 .
\end{align}

\end{proof}

Now, we bound the term $|\Delta_h^{\gamma} f_\pi(x) - \Delta_{h,K}^{\gamma} f_\pi(x) |$.

\begin{lemma}
\label{lem:riesz_trunc}
Assume $\Bigl| f_\pi(x-kh)\Bigr| \leq C \exp(-|k|h) $ for some $C >0$ and $|k| > K$ for some $K<\infty$, where $K \in \mathds{N}_+$. Then the following bound holds:
\begin{align}
\Bigl|\Delta_h^{\gamma} f_\pi(x) - \Delta_{h,K}^{\gamma} f_\pi(x) \Bigr| = {\cal O}\Bigl(\frac{1}{hK}\Bigr).
\end{align}
\end{lemma}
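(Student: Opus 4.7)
The identity to bound is
\begin{align*}
\Delta_h^{\gamma} f_\pi(x) - \Delta_{h,K}^{\gamma} f_\pi(x) = \frac{1}{h^\gamma} \sum_{|k|>K} g_{\gamma,k}\, f_\pi(x-kh),
\end{align*}
so the plan is to estimate the tail sum by combining algebraic decay of the coefficients $g_{\gamma,k}$ with the exponential decay of $f_\pi$ supplied by \Cref{asmp:tail}. The first step is to establish that $|g_{\gamma,k}| = {\cal O}(|k|^{-(1+\gamma)})$ as $|k| \to \infty$. To eliminate the $(-1)^k$ factor and the gamma function evaluated at a potentially negative argument, I would invoke the Euler reflection formula $\Gamma(z)\Gamma(1-z) = \pi/\sin(\pi z)$ with $z = \gamma/2 - k + 1$, which gives the clean representation
\begin{align*}
g_{\gamma,k} = -\frac{\Gamma(\gamma+1)\sin(\pi\gamma/2)}{\pi} \cdot \frac{\Gamma(k-\gamma/2)}{\Gamma(k+\gamma/2+1)}.
\end{align*}
The classical Stirling ratio $\Gamma(k+a)/\Gamma(k+b) \sim k^{a-b}$ then produces the decay rate, so $|g_{\gamma,k}| \leq M|k|^{-(1+\gamma)}$ for all $|k| \geq 1$ with some constant $M>0$.

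Plugging the two bounds into the tail sum and using symmetry in $k$ reduces the task to estimating $\sum_{k>K} k^{-(1+\gamma)} e^{-kh}$. Here lies the main obstacle: for $\gamma \in (-1,0)$ the purely polynomial sum $\sum k^{-(1+\gamma)}$ diverges, so the exponential factor is indispensable; yet using it crudely (e.g., $e^{-kh} \leq 1$, or $e^{-kh} \leq e^{-Kh}$ followed by integration by parts) either makes the bound blow up as $h \to 0$ or produces an exponentially tight but polynomially loose estimate instead of the $O(1/(hK))$ rate that is stated. The remedy I would use is to interpolate polynomial and exponential decay via the elementary inequality $e^{-x} \leq a^a e^{-a} x^{-a}$ (valid for all $a, x > 0$, obtained from $\max_{x>0} x^a e^{-x} = a^a e^{-a}$), applied with $x = kh$ and the balanced choice $a = 1-\gamma$. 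This is tailored so that $k^{-(1+\gamma)} (kh)^{-(1-\gamma)} = h^{\gamma-1} k^{-2}$, reducing the residual sum to $\sum_{k>K} k^{-2}$, which the standard integral comparison bounds by $1/K$. Combining, $\sum_{k>K} k^{-(1+\gamma)} e^{-kh} = {\cal O}(h^{\gamma-1}/K)$, and multiplying by $1/h^\gamma$ (and doubling to incorporate the $k < -K$ tail) yields the advertised ${\cal O}(1/(hK))$.

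The reflection/Stirling step is routine; the genuinely delicate point is the selection of the exponent $a = 1-\gamma$. Any $a > -\gamma$ would make the sum convergent in $k$, but only $a = 1-\gamma$ produces matching powers of $h$ and $K$, giving exactly the rate $1/(hK)$ stated in the lemma.
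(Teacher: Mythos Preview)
Your argument is correct and reaches the stated bound. The paper's own proof takes a somewhat different route after arriving at the same tail sum $h^{-\gamma}\sum_{k>K} g_{\gamma,k}\,e^{-kh}$ (with the coefficient asymptotic $g_{\gamma,k}={\cal O}(k^{-(\gamma+1)})$ simply cited from \citeNew{ortigueira2006riesz,ccelik2012crank} rather than derived via reflection and Stirling). Instead of the pointwise inequality $e^{-x}\le a^a e^{-a}x^{-a}$, the paper bounds the sum by the integral $h\int_K^\infty (yh)^{-(\gamma+1)}e^{-yh}\,dy$, changes variables to identify it as the upper incomplete gamma function $\Gamma(-\gamma,hK)$, and then invokes a bound from \citeNew{borwein2009uniform} of the form $\Gamma(-\gamma,z)\le C/z$. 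Your route is more elementary and self-contained, requiring no external special-function estimate; the paper's route has the advantage of naming the tail as a classical object, from which sharper asymptotics are in principle available (for large $hK$ the true decay is $\sim (hK)^{-\gamma-1}e^{-hK}$, far stronger than $1/(hK)$). One small remark on your closing comment: it is not quite that only $a=1-\gamma$ ``produces matching powers of $h$ and $K$'' --- any $a>-\gamma$ yields a bound of the form $(hK)^{-(\gamma+a)}$, so the powers of $h$ and $K$ always match; $a=1-\gamma$ is simply the choice that reproduces the specific exponent $1$ asserted in the lemma.
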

\begin{proof}
By definition we have
\begin{align}
\Bigl|\Delta_h^{\gamma} f_\pi(x) - \Delta_{h,K}^{\gamma} f_\pi(x) \Bigr| &= \Bigl| h^{-\gamma} \sum_{k \notin \llbracket -K, K\rrbracket}  g_{\gamma,k} f_\pi(x-kh)  \Bigr| \\
&\leq  h^{-\gamma} \sum_{k \notin \llbracket -K, K\rrbracket}  g_{\gamma,k} \Bigl|f_\pi(x-kh)\Bigr| 
\end{align}
By the hypothesis and the symmetry of the coefficients ($g_{\gamma,k} = g_{\gamma, -k}$), we have
\begin{align}
\Bigl|\Delta_h^{\gamma} f_\pi(x) - \Delta_{h,K}^{\gamma} f_\pi(x) \Bigr| &\leq C h^{-\gamma} \sum_{k = K+1}^\infty  g_{\gamma,k} \exp(-k h) 
\end{align}
From \citepNew{ortigueira2006riesz,ccelik2012crank}, we know that $g_{\gamma,k} = {\cal O}(\frac1{k^{\gamma+1}})$, then we obtain
\begin{align}
\Bigl|\Delta_h^{\gamma} f_\pi(x) - \Delta_{h,K}^{\gamma} f_\pi(x) \Bigr| &\leq C h^{-\gamma} \sum_{k = K+1}^\infty  \frac1{k^{\gamma+1}} \exp(-k h)\\
&= C h  \sum_{k = K+1}^\infty  \frac1{(hk)^{\gamma+1}} \exp(-k h) \\
&\leq C h \int_{K}^\infty (yh)^{-(\gamma+1)} \exp(-yh) \> dy
\end{align}
By making a change of variables, we obtain
\begin{align}
\Bigl|\Delta_h^{\gamma} f_\pi(x) - \Delta_{h,K}^{\gamma} f_\pi(x) \Bigr| &\leq C  \int_{hK}^\infty y^{-(\gamma+1)} \exp(-y) \> dy \\
&= C \Gamma(-\gamma,hK),
\end{align}
where $\Gamma(\cdot,\cdot)$ denotes the incomplete gamma function \citepNew{borwein2009uniform}. Then by using Theorem 2.4 of \citeNew{borwein2009uniform}, we obtain the desired result as follows:
\begin{align}
\Bigl|\Delta_h^{\gamma} f_\pi(x) - \Delta_{h,K}^{\gamma} f_\pi(x) \Bigr| &\leq C \frac1{hK}.
\end{align}
\end{proof}

\subsection{Proof of Theorem~\ref{thm:riesz_full}}

\begin{proof}
We decompose the error as follows:
\begin{align}
\Bigl|\D^\gamma f_\pi(x) - \Delta_{h,K}^{\gamma} f_\pi(x) \Bigr| \leq \Bigl|\D^\gamma f_\pi(x) - \Delta_h^{\gamma} f_\pi(x) \Bigr| + \Bigl|\Delta_h^{\gamma} f_\pi(x) - \Delta_{h,K}^{\gamma} f_\pi(x) \Bigr|.
\end{align}
Then we obtain the desired result by applying Lemmas~\ref{lem:riesz} and \ref{lem:riesz_trunc} to the first and the second terms of the left hand side of the above inequality.
\end{proof}

\section{Proof of Theorem \ref{thm:euler_approx}}

Before presenting the proof of Theorem~\ref{thm:euler_approx}, let us define the following SDEs which will be useful in the analysis:
\begin{align}
d\x_t &= \ba{\x_{t-}}dt + dL^\alpha_t \label{eqn:sdex} \\
d\y_t &= \bta{\y_{t-}}dt + dL^\alpha_t\label{eqn:sdey} 
\end{align}
where $b$ and $\tilde{b}$ are defined in \eqref{eqn:drift} and \eqref{eqn:approxriesz}, respectively. Here, \eqref{eqn:sdex} is our main SDE, \eqref{eqn:sdey} is another SDE whose drift is $\tilde{b}$.  

Let us first present the following lemma, which will be useful for proving Theorem~\ref{thm:euler_approx}.

\begin{lemma}
\label{lem:weakerror}
Let $(\x_t)_{t\geq 0}$ and $(\y_t)_{t\geq 0}$ be the solution processes of the SDEs \eqref{eqn:sdex} and \eqref{eqn:sdey}. Assume that both $(\x_t)_{t\geq 0}$ and $(\y_t)_{t\geq 0}$ are geometrically ergodic with their unique invariant measures and $|\partial_x g|$ is bounded. Further assume that the truncation parameter $K$ is chosen in such a way that \Cref{asmp:tail} holds for any $x$. Then the following bound on the weak error holds:
\begin{align}
\Bigl| \mathds{E}[g(\x_t) - g(\y_t)] \Bigr| \leq C \bigl(1- \exp(- \lambda t)\bigr) \bigl(h^2 + \frac1{hK}\bigr) ,
\end{align}
for some $C,\lambda>0$.
\end{lemma}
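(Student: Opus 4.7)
The plan is to attack the weak error via a standard Duhamel-type identity between the two Markov semigroups, and then control the difference of the generators using Theorem~\ref{thm:riesz_full} together with a gradient estimate coming from geometric ergodicity. Let $P_t^b$ and $P_t^{\tilde b}$ denote the semigroups associated with \eqref{eqn:sdex} and \eqref{eqn:sdey}, with infinitesimal generators $\mathcal L^b$ and $\mathcal L^{\tilde b}$. Because the two SDEs are driven by the same stable L\'evy motion and differ only in the drift, the non-local parts of the generators cancel and one has
\begin{align*}
(\mathcal L^b - \mathcal L^{\tilde b}) \varphi(x) = \bigl( b(x,\alpha) - \tilde b_{h,K}(x,\alpha)\bigr) \partial_x \varphi(x).
\end{align*}
I would then write the classical variation-of-parameters identity
\begin{align*}
(P_t^b - P_t^{\tilde b})g \;=\; \int_0^t P_{t-s}^{\tilde b}\,(\mathcal L^b - \mathcal L^{\tilde b})\, P_s^b g \;ds,
\end{align*}
which one can verify by differentiating $s \mapsto P_{t-s}^{\tilde b} P_s^b g$ and using that each semigroup commutes with its own generator.

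Next I would bound the integrand pointwise. The drift difference is controlled uniformly in $x$ by Theorem~\ref{thm:riesz_full} combined with the standing assumption that $\sup_x C(x)/\phi(x) < \infty$, yielding $\|b - \tilde b\|_\infty = \mathcal O(h^2 + 1/(hK))$. Since $P_{t-s}^{\tilde b}$ is a contraction on $L^\infty$, it suffices to bound $\|\partial_x P_s^b g\|_\infty$. Here I would use the geometric ergodicity of $(\x_t)_{t\ge 0}$: under this assumption, combined with the uniform Lipschitz control of $g$ (we are given $|\partial_x g|$ bounded), a standard gradient estimate gives
\begin{align*}
\|\partial_x P_s^b g\|_\infty \;\le\; C_1 e^{-\lambda s}\, \|\partial_x g\|_\infty
\end{align*}
for some $C_1,\lambda>0$. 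Plugging everything together and integrating $e^{-\lambda s}$ from $0$ to $t$ produces exactly the factor $(1 - e^{-\lambda t})/\lambda$, and absorbing this constant into $C$ yields the announced bound.

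The main obstacle is the gradient estimate in the last step: although geometric ergodicity is assumed, turning it into a quantitative exponential decay of $\|\partial_x P_s^b g\|_\infty$ is the delicate point, since the heavy-tailed stable noise and the potentially non-smooth drift $b$ make the usual Brownian-motion coupling arguments inapplicable. The cleanest route is to invoke existing results for ergodic $\alpha$-stable SDEs (e.g.\ in the spirit of \citealtNew{masuda2007ergodicity}), which under \Cref{assumption:ergo} and the mean-reverting Lyapunov structure of \Cref{asmp:lyapunov} provide exactly the $V$-uniform exponential ergodicity needed to derive such gradient estimates for Lipschitz test functions. Once this ingredient is in hand, the rest of the proof is bookkeeping: uniform bound on $b - \tilde b$, contraction of $P_{t-s}^{\tilde b}$, and integration in $s$.
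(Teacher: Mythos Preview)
Your approach matches the paper's overall strategy: the same Duhamel identity between the two semigroups, the same reduction of the generator difference to $(b-\tilde b)\,\partial_x$, and the same uniform drift bound from Theorem~\ref{thm:riesz_full}. The one genuine difference is how the exponential factor is extracted. You keep $\partial_x P_s^{b} g$ intact and appeal to a gradient estimate $\|\partial_x P_s^{b} g\|_\infty \le C_1 e^{-\lambda s}\|\partial_x g\|_\infty$, using only $L^\infty$-contraction on the outer semigroup $P_{t-s}^{\tilde b}$. The paper instead \emph{commutes} the derivative through the inner semigroup, writing $\partial_x P_{t-s}^{Y} g = P_{t-s}^{Y}\partial_x g$ under an explicit ``interchangeability of integration and differentiation'' assumption, and then applies its geometric-ergodicity bounds to both $P_s^{X}$ and $P_{t-s}^{Y}$ acting on bounded functions. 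Your route is more candid about where the real analytic work lies (you correctly flag the gradient estimate for stable-driven SDEs as the delicate step); the paper's route trades that difficulty for an interchange-of-derivative hypothesis. In both cases this step is asserted rather than proved, so the two arguments sit at the same level of rigor.
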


\begin{proof} 
We follow a standard approach for weak error analysis in SDEs. We make use of the semigroups associated with $(\x_t)_{t\geq 0}$ and $(\y_t)_{t\geq 0}$, given as $\pb_t g(x)\triangleq \mathds{E}[g(\x_t)]$ and $\pt_t g(x) \triangleq  \mathds{E}[g(\y_t)]$.
Then, we rewrite the weak error by using the semigroups, given as follows \citeNew{kohatsu2015short}:
\begin{align}
 \mathds{E}[g(\x_t) - g(\y_t)] &= \pb_t g(x) - \pt_t g(x) \\
 &= \int_{0}^t \partial_s \{ \pb_s \pt_{t-s} g(x) \} ds. \label{eqn:weak_error_semigroup}
\end{align} 
We now investigate the integrand, as follows:
\begin{align}
\partial_s \{ \pb_s  \pt_{t-s} g(x) \} &=  (\partial_s \pb_s) \pt_{t-s} g(x) + \pb_s (\partial_s \pt_{t-s}) g(x) \\
&= \pb_s \ab \pt_{t-s} g(x) - \pb_s \at \pt_{t-s} g(x) \\
&= \pb_s \{ \ab -  \at \} \pt_{t-s} g(x), 
\end{align}
where $\ab$ and $\at$ are the generators of the SDEs in \eqref{eqn:sdex} and \eqref{eqn:sdey}, respectively, and they are defined as follows \citeNew{duan}: 
\begin{align}
\ab f(x) &\triangleq \ba{x} \partial_x f(x) + \int_{\mathds{R} \setminus 0 } [f(x+y) - f(x) - \mathds{1}_{\{|y|<1\}} y \partial_x f(x)]v(dy), \\
\at f(x) &\triangleq \bta{x} \partial_x f(x) + \int_{\mathds{R} \setminus 0 } [f(x+y) - f(x) - \mathds{1}_{\{|y|<1\}} y \partial_x f(x)]v(dy),
\end{align}
for a differentiable function $f$, where $\mathds{1}$ is the indicator function and $v(dy)$ is the L\'{e}vy-measure of the symmetric $\alpha$-stable L\'{e}vy process defined in \eqref{eqn:levymeasure}.
Since these SDEs have the same volatility, the difference $(\ab - \at)f(x)$ simplifies and it is equal to $(\ba{x}-{\bta{x}})\partial_x f(x)$. Accordingly, we obtain the following expression: 
\begin{align}
 \partial_s \{ \pb_s  \pt_{t-s} g(x) \} &= \pb_s  (\ba{x}-{\bta{x}})\partial_x \pt_{t-s} g(x), \\
 &=  \pb_s  (\ba{x}-{\bta{x}}) \pt_{t-s} \partial_x g(x) \label{eqn:sde_semigroup}
\end{align} 
where we assumed the interchangeability of integration and differentiation. 
By the ergodicity assumptions, we have: 
\begin{align}
|\pb_{s} f(x)| &\leq C \exp\bigl(-\lambda_x s\bigr) \|f\|_{\infty}, \label{eqn:ergo1} \\
|\pt_{t-s} f(x)| &\leq C \exp\bigl(-\lambda_y(t-s)\bigr) \|f\|_{\infty} \label{eqn:ergo2}
\end{align}
for some $C, \lambda_x, \lambda_y>0$ and a bounded function $f$. By injecting \eqref{eqn:sde_semigroup} into \eqref{eqn:weak_error_semigroup} and then
using the boundedness assumption on $\partial_x g$, \eqref{eqn:ergo1}, \eqref{eqn:ergo2}, and Theorem~\ref{thm:riesz_full}, we obtain the following inequality: (for some $C>0$)
\begin{align} 
\Bigl| \mathds{E}[g(\x_t) - g(\y_t)] \Bigr| &\leq C \Bigl(h^2 + \frac1{hK}\Bigr) \int_0^t  \exp(-\lambda_x s)ds \\
&\leq C \bigl(1- \exp(-\lambda_x t)\bigr) \Bigl(h^2 + \frac1{hK}\Bigr),
\end{align}
as desired. This completes the proof. 

\end{proof}

\subsection{Proof of Theorem \ref{thm:euler_approx}}

\begin{proof}
Let us first define the following quantities:
\begin{align}
\nu(g) &=  \int g(\x) \pi(d\x) \\
\tilde{\nu}(g) &=  \int g(\y) \tilde{\pi}(d\y) 
\end{align}
where $\pi$, and $\tilde{\pi}$ are the unique invariant measures of \eqref{eqn:sdex} and \eqref{eqn:sdey}, respectively. And let $(\x_t)_{t\geq 0}$ and $(\y_t)_{t\geq 0}$ be the solution processes of the SDEs \eqref{eqn:sdex} and \eqref{eqn:sdey}. By the triangle inequality, we have 
\begin{align}
\Bigl|\nu(g) - \lim_{N \rightarrow \infty} \tilde{\nu}_N(g) \Bigr| &\leq \Bigl|\nu(g) - \tilde{\nu}(g)\Bigr| + \Bigl|\tilde{\nu}(g) -  \lim_{N \rightarrow \infty} \tilde{\nu}_N(g) \Bigr| .
\end{align}
Due to the ergodicity assumptions, we can rewrite the right hand side of the above inequality as follows:
\begin{align}
\Bigl|\nu(g) - \lim_{N \rightarrow \infty} \tilde{\nu}_N(g)\Bigr| &\leq \Bigl| \lim_{t \rightarrow \infty} \mathds{E}\bigl[g(\x_t) - g(\y_t)\bigr]  \Bigr| + \Bigl|\tilde{\nu}(g) -  \lim_{N \rightarrow \infty} \tilde{\nu}_N(g) \Bigr| \\
 &= \lim_{t \rightarrow \infty} \Bigl| \mathds{E}\bigl[g(\x_t) - g(\y_t)\bigr] \Bigr| + \Bigl|\tilde{\nu}(g) -  \lim_{N \rightarrow \infty} \tilde{\nu}_N(g) \Bigr| \label{eqn:squeeze}
\end{align}
where \eqref{eqn:squeeze} can be obtained by the reverse triangle inequality and the squeeze theorem. 
By \citeNew{panloup2008recursive}, we have almost surely
\begin{align}
\Bigl| \tilde{\nu}(g) -  \lim_{N \rightarrow \infty} \tilde{\nu}_N(g) \Bigr| = 0 . \label{eqn:lim3} 
\end{align}
By Lemma~\ref{lem:weakerror}, we have
\begin{align}
\lim_{t \rightarrow \infty} \Bigl| \mathds{E}\bigl[g(\x_t) - g(\y_t)\bigr] \Bigr| \leq C (h^2 + \frac1{hK}), \label{eqn:lim1}
\end{align}
for some $C>0$. 
Finally, by injecting \eqref{eqn:lim3}, and \eqref{eqn:lim1} in \eqref{eqn:squeeze}, we obtain the desired result:
\begin{align}
|\nu(g) - \lim_{N \rightarrow \infty} \tilde{\nu}_N(g)| \leq C (h^2 + \frac1{hK}) . 
\end{align}
This completes the proof. 

\end{proof}

\begin{remark}
The assumption \Cref{assumption:ergo} is not very restrictive and for the SDE \eqref{eqn:levysde} it can be easily satisfied if the following conditions hold:

A1) $x b(x,\alpha) <= - a x^2 +c$ with $a,c>0 $ .\\
A2) Let $S_K^i (x) = \sum_{k=-K}^K g_{k,\gamma} f_\pi^{(i)}(x-kh)$, where $f_\pi^{(i)}$ is the $i$'th derivative of $f_\pi$. Then $\{S_K\}_{K > 0}$ converges uniformly when $K \rightarrow \infty$ and is bounded, for any $x$ belonging to a compact set and $i >= 1$. 

A1 is a standard growth condition. A2 is mild due to the nature of $f_\pi$, and it ensures $X_t$ to have a smooth density (see \citeNew{picard1996existence}). We note that, if the SDE \eqref{eqn:levysde} satisfies A1-2, then it is easy to show that so does the perturbed SDE defined in \Cref{assumption:ergo}. 
\end{remark}

\section{Proof of Corollary~\ref{cor:riesz_simple}}

\begin{proof}
It is easy to check that \eqref{eqn:drift_final} corresponds to using \eqref{eqn:approxriesz} with $h= \rmd(x)$. Then we obtain the first part of the conclusion by directly applying Theorem~\ref{thm:riesz_full}. The second part of the conclusion can be proved by using the same proof technique presented in Theorem~\ref{thm:euler_approx}.
\end{proof}

\begin{remark}
Corollary~\ref{cor:riesz_simple} implies that the weak error FLA depends heavily on the structure of the target density. If the high probability region of $\pi$ is concentrated in a particular area, $K_x$ would be small and vice versa. On the other hand, if $x$ is near a mode of $\pi$ or $f_\pi(x)$ is symmetric around $x$, or $f_\pi$ varies very slowly with $x$, $r(x)$ can be arbitrarily small. Finally, Corollary~\ref{cor:riesz_simple} expresses the overall error in terms of $K_x$ and $r(x)$, and illustrates the roles of these terms.
\end{remark}

\section{A Note on the Experiments Conducted on SG-FLA}

In the SG-FLA experiments, we monitored the training likelihood and we did not observe that SG-FLA is able to find a better mode in a systematic way. However, we did observe that SG-FLA is more robust to the size of the minibatches -- therefore to the variance of the stochastic gradients -- when compared to SGLD. We believe that this observation is caused by the fact that the jumps in SG-FLA provide robustness against stochastic gradients and the choice of the step sizes.

\bibliographystyleNew{icml2016}
\bibliographyNew{levylangevin}

\end{document}